\newcommand{\new}[1]{\textcolor{black}{#1}}
\theoremstyle{plain}
\newtheorem{theorem}{Theorem}
\newtheorem{proposition}{Proposition}
\newtheorem{lemma}{Lemma}
\newtheorem{assumption}{Assumption}
\newtheorem{corollary}{Corollary}
\theoremstyle{remark}
\newtheorem{definition}{Definition}
\newcommand{\da}{\mathtt{DA}} 
\newcommand{\eada}{\mathtt{M}}
\newcommand{\rk}{\textup{rk}}
\theoremstyle{remark}
\newcolumntype{C}{>{$}c<{$}} 
\newcolumntype{C}{>{$}c<{$}} 
\begin{document}
	\begin{frontmatter}
		
		\title{The Large and Likely Inefficiency of Stable Matching Mechanisms}
		\runtitle{The Inefficiency of Stable Matchings}
		
		\begin{aug}
			\singlespacing 
			\author[id=au1,addressref={add1}]{\fnms{Josu\'e}~\snm{Ortega}}
			\author[id=au2,addressref={add2}]{\fnms{Gabriel}~\snm{Ziegler}}
			\author[id=au3,addressref={add3}]{\fnms{R. Pablo}~\snm{Arribillaga}}
			\author[id=au4,addressref={add4}]{\fnms{Geng}~\snm{Zhao}}
			
			\address[id=add1]{\orgname{Queen's University Belfast}}
			
			\address[id=add2]{\orgname{Freie Universität Berlin and Berlin School of Economics}}
			
			\address[id=add3]{\orgdiv{Instituto de Matemática Aplicada San Luis},
				\orgname{Universidad Nacional de San Luis, and CONICET}}
			
			\address[id=add4]{\orgname{University of California, Berkeley}}
		\end{aug}
		
		\support{Emails: j.ortega@qub.ac.uk, gabriel.ziegler@fu-berlin.de, rarribi@unsl.edu.ar, gengzhao@berkeley.edu\\
			Manuscript resubmitted to Theoretical Economics. \hfill{\it {\ddmmyyyydate\today}}
		}
		
		
		
		\begin{abstract}
			\begin{center}
				\bf Abstract
			\end{center}
We prove that any stable matching mechanism suffers from systematic inefficiency of striking magnitude: in large random markets, any stable allocation is Pareto-inefficient with high probability, and almost all students can simultaneously improve their placements without harming anyone else.
We establish this result by showing that the envy digraph generated by the student-proposing Deferred Acceptance mechanism contains a unique giant strongly connected component, implying that nearly all students are improvable via trading cycles.
Finally, we show that every maximal cycle packing covers almost all students, revealing a surprising asymptotic equivalence among all efficient mechanisms that Pareto-dominate DA.
\end{abstract}

		\begin{keyword}
			\kwd{unimprovable students, school choice, random markets}
		\end{keyword}
		\begin{keyword}[class=JEL] 
			\kwd{C78}
			\kwd{D47}
		\end{keyword}

	\end{frontmatter}
	\newpage
	\section{Introduction}
	\label{sec:introduction}
	\setlength{\parskip}{0.25em} 

Stability is a central principle in the design of school choice mechanisms. It eliminates justified envy, promotes perceived fairness, and underlies the widespread adoption of the student-proposing Deferred Acceptance (DA) algorithm, which generates the student-optimal stable matching. Thus, if stability is a fundamental design desideratum, DA provides the natural benchmark.

Yet stability comes at a cost, because even the student-optimal stable matching need not be Pareto efficient.
Some students could receive strictly better assignments without making anyone worse off, and the degree of this inefficiency may be significant both in theory and in practice.\footnote{DA may assign every student to one of their least-preferred schools \citep{kesten2010school}, and approximately 2\% of students in the New York City high school match, which uses a version of DA, could have been assigned to a more preferred school without harming others \citep{abdulkadirouglu2009strategy}.} 
This observation raises two important questions: \emph{how many students could improve their DA placement without harming others}, and \emph{how many of them could do so simultaneously}? These are the questions we address in this paper. 

\new{To answer these questions, we focus on \emph{improvable} students: those who could be assigned to a strictly more preferred school under some Pareto improvement upon the DA matching (or under any other stable matching, by DA's student-optimality). It is well-known that improvable students exactly correspond to cycles in DA's directed envy graph, where students point to all those whose placements they envy. Consequently, they belong to non-trivial strongly connected components of this graph (SCCs), that is, sets of more than one node where everyone can reach each other by envy relations.}

\new{The SCC representation of improvable students allows us to answer our first question: how many students are improvable? In random matching markets with independently and uniformly distributed preferences, we prove that DA's envy graph admits a unique giant SCC with high probability}, implying that the fraction of unimprovable students converges to zero as market size grows (Theorem \ref{thm:scc}). The underlying reason is that DA generates a very dense envy digraph: most students make a logarithmic number of applications, so the probability that two sizable groups of students never envy each other becomes vanishingly small. As a consequence, the envy digraph develops a familiar structure in homogeneous random graphs: a unique giant SCC in which almost all students lie. In large markets, therefore, almost every student can be strictly better off relative to any stable matching without harming anyone.
	
Although DA's Pareto-inefficiency was previously known, Theorem \ref{thm:scc} reveals that its pervasiveness is not confined to pathological examples or specific datasets but emerges as a systematic property in random markets. The consequences are severe. We show that the vast majority of students could improve their placement without harming others. Our result implies that, while the most disadvantaged students cannot benefit from enhanced efficiency (i.e. those assigned to schools near the bottom of their preference lists, or not assigned at all), the vast majority are improvable, revealing both the substantial scope for welfare improvements and the fundamental inefficiency of stability in large markets.
	
We next ask how much of this potential can be realized simultaneously. One might object that these potential improvements are mutually exclusive, so that helping one student precludes helping another. We prove otherwise. Theorem \ref{thm:cyclepacking} establishes that any mechanism Pareto-dominating DA improves essentially the same set of students---almost everyone---in large markets.

This surprising asymptotic equivalence has important implications. While \citet{knipe2025improvable} show that in carefully constructed instances, different efficient mechanisms can vary dramatically in the number of students they improve, these differences vanish asymptotically.\footnote{EADA and DA+TTC may not only improve fewer students than alternative mechanisms that Pareto-dominate DA in specific instances, but may do so while generating more justified envy \citep{knipe2025improvable}.} Mechanisms such as \citeauthor{kesten2010school}'s Efficiency-Adjusted Deferred Acceptance (EADA) or Top Trading Cycles using DA's allocation as an endowment become virtually indistinguishable in terms of coverage as markets grow. The high density of DA's envy digraph ensures that if a student is excluded from one trading cycle, that student is likely to participate in another.
Once one departs from stability while requiring Pareto-dominance over DA, design choices should therefore be guided by criteria other than the number of improved students---such as fairness interpretations or susceptibility to manipulation, both areas in which EADA has shown promise.
	
Throughout the paper, we state and prove our main results for one-to-one matching markets with independently and uniformly distributed preferences and priorities. These assumptions are imposed for clarity and tractability. However, our results extend beyond this baseline setting. In Appendix \ref{app:appendixone}, we prove that both the unique giant SCC and the asymptotic equivalence of efficient mechanisms extend to many-to-one matching markets where schools have multiple but constant seats, though convergence rates depend on school capacity (Theorems \ref{thm:many-to-one-scc} and \ref{thm:cyclepacking-many-to-one}). In Appendix \ref{app:appendixtwo}, we establish robustness to preference correlation through tiered preferences (Theorem~\ref{thm:tiered}) and general correlated preferences satisfying sufficient regularity conditions, including bounded rank-ordered logit preferences (Theorem~\ref{thm:correlated-giant} and Corollary~\ref{cor:logit-giant}). 
\new{ In Appendix \ref{app:appendixthree}, we use simulations to examine the role of correlation in priorities to find that our results are robust to mild and moderate (but not high) correlation. When we introduce correlation in both sides, the unique giant SCC becomes even larger than under the iid baseline. Thus, the pervasiveness of improvable students under DA, and hence under any other stable matching mechanism, remains a robust phenomenon across a wide range of realistic market structures.}

	\section{Related Literature}
	\label{sec:literature}

Our work contributes to several strands of literature on matching mechanisms and random market models. We organize our discussion around four interconnected themes.

\paragraph{Measuring DA's inefficiency.}
The Pareto-inefficiency of DA has long been recognized \citep{abdulkadirouglu2003, kesten2010school}. Recent work quantifies this inefficiency using different welfare criteria. \citet{lee2014efficiency} and \citet{che2019efficiency} study \emph{utilitarian} welfare and show that, under particular assumptions in large markets, DA can achieve near-optimal average utility. This is not in tension with our results. Pareto inefficiency is an \emph{ordinal} notion: it can be widespread even when aggregate utility is high, for example when schools are close substitutes so that the utility gaps between adjacent ranks are small. Our contribution focuses on the prevalence of Pareto improvements over DA within the class of stable matchings and does not require cardinal interpersonal utility comparisons. Empirically, our objects of interest can be assessed from submitted preferences, whereas utilitarian evaluations require estimating a cardinal preference model.

\cite{escobar2025efficiency} study the top-choice share---the fraction of students receiving their first choice---in a continuum model with specific priority structures. Since students getting their top choice are unimprovable, our Theorem~\ref{thm:scc} implies this fraction becomes negligible in large markets\new{, a result also established in more generality by \cite{ashlagi2019assigning} for DA with random and independent priorities, which they show no longer holds when there is a single master priority list used by all schools.} \footnote{Beyond DA, \cite{pritchard2023asymptotic} and \cite{ortegascwe} analyze top-choice shares for the Boston mechanism.}

Other studies use measures related to the price of anarchy. \cite{anshelevich2013anarchy} examine the ratio between utilitarian welfare under efficient versus stable matchings, while \cite{boudreau2013preferences} develop an ordinal variant using rank sums. Both metrics require interpersonal welfare comparisons and are silent about the distribution of welfare gains. 

 Necessary and sufficient conditions on priorities are known for DA to be inefficient for some preference profile \citep{ergin2002efficient}. Empirical studies have documented DA's inefficiency in practice \citep{abdulkadiroglu2005, abdulkadirouglu2009strategy, che2019efficiency, ortega2023cost}.

\paragraph{Efficient mechanisms that dominate DA.} The most prominent mechanism in this class is Efficiency-Adjusted Deferred Acceptance \citep[EADA,][]{kesten2010school}. Over the past decade, EADA's properties and implementation have been extensively studied \citep{bando2014existence, tang2014new, dur2019school, troyan2020essentially, troyan2020obvious, ehlers2020legal, tang2021weak, dougan2021minimally, reny2022efficient, chen2023regret,dogan2023existence,afacan2025improving,shirakawasimple}, demonstrating that it is possible to achieve an efficient improvement over DA while maintaining relatively low instability and manipulability. Another well-known efficient mechanism that Pareto-dominates DA is DA+TTC, which applies the top trading cycles (TTC) procedure to the allocation obtained by DA \citep{alcalde2017}. Beyond EADA and DA+TTC, our companion paper shows that any mechanism that weakly Pareto-dominates DA can lead to a poor rank distribution and high levels of student segregation \citep{ortega2025pareto}.

Two seminal papers have studied the family of mechanisms that Pareto dominate DA: \cite{alva2019stable} provide a comprehensive study of stable-dominating rules, which include any efficient mechanism that improves on student-proposing DA, and \cite{tang2014new}, who introduce the concept of improvable students and under-demanded schools. While these foundational works identify the existence of improvable students, they do not quantify their prevalence---gaps we address by deriving asymptotic bounds.\footnote{\citet[][p. 548]{tang2014new} conjectured that the set of unimprovable students is large. Our work demonstrates that this is not the case in large markets.} 
	
\paragraph{Empirical evaluations of mechanisms that dominate DA.} \cite{diebold2017matching} conduct an extensive comparison across course allocation datasets, finding that the difference between DA and EADA rank distributions is not statistically significant in 7 out of 19 datasets. They also show that EADA and DA always match the same number of students. \cite{ortega2023cost} compute counterfactual EADA allocations for Budapest's school choice system and find that, while EADA improves the average student placement, it does not help unassigned students or the worst-placed pupil (assigned to his 13th ranked school under both mechanisms) or those unassigned. 
In laboratory experiments, \cite{cerrone2022school} find that EADA generates Pareto-efficient allocations more frequently than DA in the lab but note that the distribution of efficiency improvements is uneven among participants. \new{\cite{freer2026experimental} find that the gains become imprecisely estimated in alternative lab experiments with higher manipulation rates}.
	
\paragraph{Random matching markets.} Our analysis of the proportion of unimprovable students builds on a rich literature on random matching markets at the intersection of economics and computer science \citep{wilson1972, knuth1976, pittel1989, pittel1992likely, immorlica2005, kojima2009, che2010asymptotic, lee2014efficiency, lee2016incentive, liu2016, ashlagi2017, che2018payoff, che2019efficiency, pycia2019evaluating, ashlagi2020tiered, ashlagi2023welfare, nikzad2022rank, ortega2023cost, ronen2025stable}. Our paper advances this literature by providing a characterization and quantification of unimprovable students, introducing new analytical tools that connect matching theory to random graph theory.
The emergence of giant strongly connected components in random digraphs is a well-studied phenomenon in graph theory, but its application to understanding efficiency in matching markets represents a novel theoretical contribution.

Finally, our Theorem \ref{thm:cyclepacking} is related to a number of asymptotic results in matching markets. While previous equivalence theorems focus on anonymous statistics such as normalized rank distributions or average utilitarian efficiency \citep{che2010asymptotic,  liu2016, pycia2019evaluating, che2018payoff}, we prove equivalence in terms of the number of students who improve over DA. Since this statistic is non-anonymous and mechanisms that Pareto dominate DA are not strategy-proof, existing equivalence results---in particular Pycia's comprehensive framework---do not apply to our setting, making Theorem \ref{thm:cyclepacking} a distinct theoretical contribution.

\section{Model}
\label{sec:model}

Following \cite{abdulkadirouglu2003}, a school choice problem $P$ consists of a set of students $I$ and a set of schools $S$. For simplicity, we shall assume that $|I| = |S| = n$, although Appendix \ref{app:appendixone} extends our results to many-to-one problems. Each student $i$ has a strict preference $\succ_i$ over the schools. Each school $s$ has one seat and a strict priority $\triangleright_s$ over the students, determined by local educational regulations.

For a given school choice problem $P$, a \textit{matching} $\mu$ is a bijection from $I$ to $S$. We denote by $\mu_i$ the school to which student $i$ is assigned and by $\mu^{-1}_s$ the set of students assigned to school $s$. 

The function $\rk_i:S \rightarrow \{1, \ldots, n\}$ specifies the rank of school $s$ according to the preference profile $\succ_i$ of student $i$:
\begin{align*}
	\rk_i(s) = |\{s' \in S: s' \succ_i s\}|+1,
\end{align*}
so that the most desirable option gets a rank of 1, whereas the least desirable gets a rank of $n$. With some abuse of notation, we use the same rank function to specify the students' rank per the priority profile of schools, i.e. $\rk_s(i)= |\{j \in I: i \triangleright_s j\}|+1$.

A matching $\mu$ \textit{weakly Pareto-dominates} matching $\nu$ if, for every student $i \in I$, $\rk_i(\mu_i) \leq \rk_i (\nu_i)$. A matching $\mu$ \textit{Pareto-dominates} matching $\nu$ if $\mu$ weakly dominates $\nu$ and there exists a student $j \in I$ with $\rk_j(\mu_j) < \rk_j (\nu_j)$. A matching is \textit{Pareto-dominated} if there exists a matching that Pareto-dominates it and is \textit{Pareto-efficient} if it is not Pareto-dominated. 

Student $i$ \emph{desires} school $s$ in matching $\mu$ if $\rk_i(s)< \rk_i(\mu_i)$ and he \emph{envies} student $j$ at matching $\mu$ if $\rk_i(\mu_j)<\rk_i(\mu_i)$. We say that student $j$ \emph{violates} student $i$'s priority at school $s$ in matching $\mu$ if $i$ envies $j$, $\mu_j =s$, and $\rk_s(i) < \rk_s(j)$. 
A matching $\mu$ is \emph{non-wasteful} if every school that is desired by some student is matched to a student. 
A matching $\mu$ is \textit{stable} if it is non-wasteful and no student's priority at any school is violated in $\mu$.

A \emph{mechanism} associates a matching to every school choice problem. We are mainly interested in the \emph{student-proposing Deferred Acceptance (DA) mechanism} \citep{gale1962}, which works as follows:

\begin{enumerate}[leftmargin=2cm]
	\item[Round 1:] Every student applies to her most preferred school. Every school tentatively accepts its most preferred applicant according to its priority and rejects the rest. 
	\item[Round $k$:] Every student rejected in the previous round applies to her next best school. Among new applicants and any previously accepted student, every school tentatively accepts its most preferred student according to its priority and rejects the rest.
\end{enumerate}

The procedure stops when there is a round without any new rejection. We use $\da(P)$ to denote the unique resulting matching generated by DA in school choice problem $P$.  $\da_i(P)$ denotes the school to which student $i$ is assigned. $\da^{-1}_s(P)$ denotes the student assigned to school $s$ in $\da(P)$.

Similarly, we will use $\eada$ to denote an arbitrary mechanism that returns a Pareto-efficient allocation that weakly Pareto-dominates DA. We use $\mathcal M(P)$ to denote the class of such mechanisms for school choice problem $P$, which include Efficiency-Adjusted Deferred Acceptance (EADA) and Top Trading Cycles using DA's allocation as endowments (DA+TTC). 

We now introduce a key concept in this paper: unimprovable students.

\begin{definition}
	A student $i \in I$ is \textit{unimprovable} if, for every matching $\eada(P) \in \mathcal M(P)$, we have $\eada_i(P)=\da_i(P)$.
\end{definition}

Unimprovable students do not necessarily need to be in an adverse situation. For example, every student who is assigned to their most preferred school in DA is unimprovable. We denote by $U(P)$ the set of unimprovable students in school choice problem $P$. It is well-known that, for any school choice problem, there is always at least one unimprovable student \citep{tang2014new}.

Throughout the paper, we analyze the one-to-one matching case described above. In Appendix \ref{app:appendixone}, we extend our main results (Theorems \ref{thm:scc} and \ref{thm:cyclepacking}) to many-to-one matching markets where schools have multiple but constant quotas.
	
	
	\subsection{Students who Are Always Unimprovable}
	\label{sec:identifying}

	Recall that a digraph $G$ is given by a set of nodes $V$ and a set of directed edges $E$. We define DA's envy graph as follows.\\
	
	\begin{definition}
		Given a school choice problem $P$, we define DA's \emph{envy digraph} $G^{\da(P)}=(I, E(\da(P)))$ where each node corresponds to a student and edge $(i,j)$ exists if $i$ envies $j$ in DA.
	\end{definition}
	
Envy graphs have been studied in the literature, for instance by \citet{lipton2004approximately} in the context of indivisible-good allocation, and more closely by \citet{dur2019school}, who refer to the same object as a ``directed application graph''. \footnote{In \cite{dur2019school}, not all priority violations are admissible, so only a subset of Pareto-improving cycles is considered. When all priority violations are allowed, their directed application graph coincides with the envy digraph studied here.}

	 A \emph{cycle} of a digraph is a sequence of nodes $i_0,i_1,\ldots,i_j,i_0$ such that there is an edge between any consecutive nodes and no edge is repeated. A \emph{trading cycle} is a cycle in which every node appears only once, except for $i_0$. DA's envy digraph gives us a full characterization of unimprovable students by identifying cycles in $G^{\da(P)}$.
	
	\begin{proposition}[\cite{dur2019school}]\label{thm:carachterization}
		A student is unimprovable if and only if he does not belong to any cycle in $G^{\da(P)}$.
	\end{proposition}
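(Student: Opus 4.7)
The plan is to translate Pareto improvements of DA into cycles in $G^{\da(P)}$ and vice versa, so that the characterization reduces to a counting identity between joiners and leavers at each school.

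For the ``if'' direction, suppose $i$ lies on a cycle. By shortening the closed walk whenever a vertex repeats, I can extract a trading cycle $i=i_0,i_1,\ldots,i_{k-1},i_0$ through $i$ consisting of distinct students. I then construct the matching $\mu$ by setting $\mu_{i_j}=\da_{i_{j+1}}(P)$ (indices mod $k$) and keeping every other student at their DA assignment. Capacities are preserved because the $\da$-assignments of the $i_j$ are merely permuted, and each $i_j$ strictly prefers $\mu_{i_j}$ to $\da_{i_j}(P)$ because $(i_j,i_{j+1})\in E(\da(P))$. Hence $\mu$ Pareto-dominates $\da(P)$. Taking any Pareto-efficient matching $\mu'$ that weakly dominates $\mu$ yields an element of $\mathcal{M}(P)$ with $\mu'_i\succeq_i\mu_i\succ_i\da_i(P)$, so $\mu'_i\neq\da_i(P)$ and $i$ is improvable.

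For the ``only if'' direction, suppose $i$ is improvable and pick $\eada\in\mathcal M(P)$ with $\eada_i(P)\neq\da_i(P)$; write $\mu=\eada(P)$ and $\nu=\da(P)$, and let $D=\{j\in I:\mu_j\neq\nu_j\}$. For each school $s$ define $J_s=\mu^{-1}(s)\setminus\nu^{-1}(s)$ and $L_s=\nu^{-1}(s)\setminus\mu^{-1}(s)$. The goal is to build a fixed-point-free permutation $\pi$ of $D$ with $\nu_{\pi(j)}=\mu_j$ for every $j\in D$, because then each cycle of $\pi$ is immediately a cycle in the envy digraph: $j$ moves to $\mu_j=\nu_{\pi(j)}$, which he strictly prefers to $\nu_j$, so $(j,\pi(j))\in E(\da(P))$. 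Since $i\in D$, the cycle of $\pi$ containing $i$ witnesses the claim.

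The crux is proving $|J_s|=|L_s|$ for every $s\in S$, because the bijections $\pi_s:J_s\to L_s$ can then be glued into $\pi$ using the fact that $\{J_s\}_{s\in S}$ and $\{L_s\}_{s\in S}$ partition $D$. Whenever $J_s\neq\emptyset$ for a real school $s$, some student desires $s$ under $\nu$, so non-wastefulness of DA forces $|\nu^{-1}(s)|=q_s$; together with $|\mu^{-1}(s)|\leq q_s$ and the identity $|\mu^{-1}(s)|-|\nu^{-1}(s)|=|J_s|-|L_s|$, this gives $|J_s|\leq|L_s|$. Summing over all schools yields $\sum_s|J_s|=\sum_s|L_s|=|D|$, which forces equality school by school. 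The one subtlety is the null school: since $\nu$ is individually rational and $\mu$ Pareto-dominates $\nu$, no assigned student can become unassigned, so $J_{s_\emptyset}=\emptyset$, and the aggregate equality then forces $L_{s_\emptyset}=\emptyset$ as well. This $s_\emptyset$-bookkeeping is the main obstacle; once it is settled, the permutation $\pi$ is well-defined, has no fixed points (since $\pi(j)=j$ would require $\nu_j=\mu_j$, contradicting $j\in D$), and its cycles are the desired cycles in $G^{\da(P)}$.
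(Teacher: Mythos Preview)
Your argument is correct, though your ``if'' and ``only if'' labels are swapped relative to the statement as phrased (``unimprovable iff no cycle''): what you call the ``if'' direction is the paper's ``only if'' and vice versa. Substantively, for the direction \emph{cycle $\Rightarrow$ improvable} you and the paper do the same thing---extract a trading cycle from the closed walk and execute the swap---and you are slightly more careful in explicitly passing to a Pareto-efficient refinement so as to land in $\mathcal{M}(P)$. The genuine difference is in the other direction, \emph{improvable $\Rightarrow$ cycle}: the paper simply cites prior work (\cite{kesten2010school}, \cite{tang2014new}, \cite{erdil2014strategy}) for the fact that a student outside every cycle is unimprovable, whereas you give a self-contained constructive proof. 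Your counting argument---using non-wastefulness of DA to get $|J_s|\le|L_s|$ at every real school, handling $s_\emptyset$ via individual rationality of DA, and then summing to force equality so that the joiner/leaver bijections glue into a fixed-point-free permutation of $D$---is a clean way to see that any Pareto improvement over DA decomposes into envy cycles. This buys you a proof that does not rely on external references and makes the cycle structure explicit; the paper's approach buys brevity at the cost of that transparency. One small presentational point: your inequality $|J_s|\le|L_s|$ is only established for real $s$ before you invoke the aggregate equality, so the null-school paragraph logically belongs \emph{before} the ``forces equality school by school'' sentence rather than after it.
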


	The characterization in Proposition \ref{thm:carachterization}, \new{which applies to every non-wasteful mechanism as shown by \cite{erdil2014strategy}}, will be key for us to quantify unimprovable students in a probabilistic framework in the next section using a graph-theoretical approach. 
Using this characterization, we now identify a subset of students who are always unimprovable
	
	\begin{proposition}
		\label{thm:twotypes}
		For any school choice problem $P$, if there is a student $i$ such that $\rk_i[\da_i(P)]=n$ 
		then $i$ is unimprovable.
	\end{proposition}
	
	\begin{proof}
If a student $i$ is assigned to his least preferred school, then $\da_i(P)$ must be the only under-demanded school because every other institution rejected student $i$, and since any student assigned to an under-demanded school is unimprovable \citep{tang2014new}, the conclusion follows. Note that there is at least one under-demanded school, e.g. the last one to receive an application in DA.\footnote{In a separate note, we show that in random i.i.d.\ markets the expected number of such schools
		(equivalently, of students whom nobody envies) equals $H_n$, the $n$-th harmonic number \citep{orteganote}.}

	\end{proof}
	
An analogous result, in a more general model by \cite{alva2019stable}, shows that every student who is unmatched under DA is also unimprovable. Together, both results show that students receiving the worst outcomes---those unassigned or matched to their least preferred school---are systematically excluded from potential improvements. 
This helps us to understand the empirical regularity documented in Section \ref{sec:literature}: efficiency adjustments over DA consistently fail to help the students who most need better placements.
However, these results leave an important question unanswered: how prevalent are unimprovable students in practice? If most students fall into the categories identified in Proposition \ref{thm:twotypes}, then efficiency improvements over DA would benefit only a small minority. Conversely, if most students are improvable, then the scope for welfare gains through alternative mechanisms could be substantial. To address this quantitative question, we turn to a probabilistic analysis of random school choice markets.

\subsection{Improvable Students as Strongly Connected Components}
\label{sec:components}
The characterization in Proposition~\ref{thm:carachterization} shows that improvable students are exactly those who belong to some directed cycle.
While this description is useful locally, our goal is to understand the \emph{global} structure of potential improvements: how many students can benefit, and how these improvements interact.
To move from local cycles to global structure, it is convenient to work with strongly connected components, defined as follows.

A directed graph is \emph{strongly connected} if for every pair of nodes there is a directed path from the first to the second and from the second to the first. 
A \emph{strongly connected component (SCC)} of a digraph $G=(I,E)$ is a maximal subgraph that is strongly connected. 
An SCC is \emph{trivial} if it consists of a single node and \emph{non-trivial} otherwise. 
The following Corollary follows trivially from Proposition \ref{thm:carachterization}.\footnote{Since Proposition \ref{thm:carachterization} is proved by \cite{dur2019school} for many-to-one problems, the non-trivial SCC characterization of improvable students immediately extends to many-to-one matching markets as well.}

\begin{corollary}
	\label{cor:scc}
	A student is improvable if and only if she belongs to a non-trivial	strongly connected component of $G^{\da(P)}$.
\end{corollary}

Note that if a student belongs to a non-trivial SCC, it must be part of a cycle with other nodes in the SCC and therefore is improvable.  Conversely, if a student is improvable, they must belong to a cycle and every node in this cycle can reach the others and be reached by them, thereby forming (part of) a non-trivial SCC, which the student we started out with belongs to.
Corollary~\ref{cor:scc} shifts the perspective from individual trading cycles to the component structure of DA’s envy digraph. 
It allows us to evaluate inefficiency by studying the size of strongly connected components: if most students lie in large components, then most are improvable. We now use this observation to address our main question.

\section{Quantifying (Un)Improvable Students}
	
Our main goal is to determine the expected fraction of improvable students in large markets by analyzing the non-trivial SCCs of DA's envy digraph.
To do so, we consider a \emph{random school choice problem} where strict preferences and priorities are drawn independently and uniformly at random.
We will use this framework assuming (as the majority of the random market literature) that the number of students and schools is equal and given by $n$. The benefit of imposing this strong assumption is that it allows us to simplify the analysis considerably.
We denote by $P_n$ a random instance of a school choice problem of size $n$ and by $G^{\da(P_n)}$ the corresponding random envy digraph generated by DA in such problem.\footnote{We discuss extensions to many-to-one matching and correlated preferences in Appendices \ref{app:appendixone} and \ref{app:appendixtwo}.} 

\new{As we study the behavior of markets as $n$ grows large, we use throughout the following standard asymptotic notation. 
For functions $f,g:\mathbb{N}\to\mathbb{R}_+$, we write $f(n)=O(g(n))$ if there exists a constant $C>0$ such that $f(n)\le C g(n)$ for all sufficiently large $n$; $f(n)=\Omega(g(n))$ if $g(n)=O(f(n))$; $f(n)=\Theta(g(n))$ if both $f(n)=O(g(n))$ and $f(n)=\Omega(g(n))$; $f(n)=\omega(g(n))$ if $f(n)/g(n)\to\infty$ as $n\to\infty$; and $f(n)=o(g(n))$ if $f(n)/g(n)\to 0$ as $n\to\infty$. 
Furthermore, we say that an event occurs with high probability if its probability tends to 1 as $n\to\infty$.}

While the geometric out-degree distribution of $G^{\da(P_n)}$ differs from the Poisson distribution characteristic of the standard Erd\H{o}s-R\'enyi random digraph model, we demonstrate below that a unique giant SCC also emerges in DA's random envy digraph with high probability.
	
	\begin{theorem}
		\label{thm:scc}
		Fix an arbitrary $\epsilon > 0$. With high probability, $G^{\da(P_n)}$ has a unique giant SCC containing at least $(1 - \epsilon)n$ students.
	\end{theorem}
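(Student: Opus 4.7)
The plan is to establish the giant SCC via forward and backward reachability. Specifically, I would show that with high probability, for almost every student $i$ the forward-reachable set $R^+(i)$ in $G^{\da(P_n)}$ has size at least $(1-\epsilon/2)n$, and likewise the backward-reachable set $R^-(i)$. Since $R^+(i)\cap R^-(i)$ is contained in the SCC of $i$, this yields a single SCC of size at least $(1-\epsilon)n$, and uniqueness follows because two such SCCs would have to overlap. This is the standard route taken for giant SCCs in random digraph theory, adapted to the envy digraph.

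The key structural input is a near-uniform description of out-neighbourhoods. Write $\mu := \da(P_n)$. Student $i$ envies $j$ iff $\mu_j$ lies among the $\rk_i(\mu_i)-1$ schools that $i$ ranks strictly above $\mu_i$. By Proposition \ref{prop:geometric}, $\rk_i(\mu_i)$ is approximately $\mathrm{Geom}(1/H_n)$, so the out-degree concentrates around $\ln n$ for almost every student. A principle-of-deferred-decisions argument applied to the McVitie--Wilson implementation then shows that, conditional on $\rk_i(\mu_i)=k$, the $k-1$ schools $i$ prefers to $\mu_i$ are approximately uniform over $(k-1)$-subsets of $S\setminus\{\mu_i\}$, so the out-neighbourhood of $i$ in $G^{\da(P_n)}$ is approximately a uniform $(k-1)$-subset of $I\setminus\{i\}$. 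An analogous claim for in-degrees follows from Proposition \ref{prop:indegree}.

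With this near-uniform random-digraph structure of mean degree $\Theta(\log n)$, I would run a BFS from a fixed vertex and analyse it in two phases. In the subcritical phase ($|R_t|=s \ll n/\log n$), the roughly $s\log n$ out-edges from $R_t$ land on approximately uniform destinations, so the explored set grows geometrically by a factor of order $\log n$ per round and crosses $n/\log n$ in $O(\log n/\log\log n)$ rounds with high probability. In the supercritical phase ($s \gg n/\log n$), each unexplored vertex is missed with probability at most $(1-1/n)^{s\log n}\leq e^{-s\log n/n}$, so the unexplored complement shrinks rapidly and falls below $\epsilon n/2$ within a constant number of rounds. The same argument applied to the reverse digraph gives $|R^-(i)|\geq (1-\epsilon/2)n$, and a union bound over the $n$ vertices completes the proof.

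The main obstacle will be making the ``approximate uniformity'' claim rigorous, since envy edges are not independent Bernoulli trials: the DA outcome $\mu$ entangles the preferences of all students, and ``$i$ envies $j$'' depends on the joint event that $j$ is matched to a school $i$ ranks highly. My plan is to handle this via a lazy-revelation coupling along the McVitie--Wilson process---reveal each student's preferences only as needed for the BFS, and couple the resulting conditional out-neighbourhood distribution with an idealized Erd{\H o}s--R\'enyi digraph of comparable density. The vanishing set of atypical students (the sources and sinks from Proposition \ref{thm:twotypes}, numbering $O(\log n)$ and $O(n/\log n)$ by Proposition \ref{thm:expected} and Corollary \ref{prop:topchoice}) can then be absorbed into the $\epsilon n$ slack in the statement.
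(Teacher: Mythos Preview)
Your route---forward/backward BFS reachability with a branching-process analysis---is the standard random-digraph playbook and is \emph{not} how the paper proceeds. The paper instead proves a global ``no cut'' statement: with high probability there exist no subsets $I_0\subseteq I$, $S_1\subseteq S$, each of size $\ge\epsilon n$, such that no student in $I_0$ ever applied to a school in $S_1$ during DA. This is established by (i) showing all but $\epsilon n$ students make at least $L$ applications; (ii) coupling applications to an i.i.d.\ uniform sequence via Knuth's \emph{amnesiac} DA (students may re-apply to schools that already rejected them), so that ``$\ge Ln$ applications from $I_0$, none landing in $S_1$'' has probability $\le(1-\epsilon)^{Ln}$; and (iii) a union bound over the at most $2^{2n}$ choices of $(I_0,S_1)$, which is beaten once $L$ is large enough. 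If the largest SCC had fewer than $(1-2\epsilon)n$ nodes, one could split $I$ into two blocks of size $\ge\epsilon n$ with no edge from one to the other, producing exactly such a forbidden $(I_0,S_1)$.

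Your proposal has a real gap at precisely the point you flag: the ``approximate uniformity'' of out-neighbourhoods is asserted, not proved, and the lazy-revelation coupling you sketch does not cleanly work. The out-neighbours of $i$ are $\{\mu^{-1}(s):s\succ_i\mu_i\}$, so even to read one envy edge you need the \emph{global} object $\mu$; you cannot reveal edges incrementally along a BFS without simultaneously conditioning on the entire DA outcome. Your claim that, conditional on $\rk_i(\mu_i)=k$, the top $k-1$ schools of $i$ are approximately a uniform $(k-1)$-subset is also not justified---the event $\rk_i(\mu_i)=k$ depends on which schools rejected $i$, which depends on who else applied there, which depends on $\mu$. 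The paper's amnesiac trick is exactly the device that dissolves this dependency: it replaces the awkward conditional structure of envy edges with genuinely i.i.d.\ uniform applications, at the cost of analysing a slightly different (but outcome-equivalent) process. Finally, your closing ``union bound over the $n$ vertices'' is not the right finishing step: your own claim is only that \emph{almost every} $i$ has large $R^\pm(i)$, which does not give per-vertex failure probability $o(1/n)$; a Markov-type argument on the number of bad vertices (or simply observing that a single good $i$ already yields the giant SCC) is what you need there.
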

	
	\subsection{Proof of Theorem \ref{thm:scc}}
	
	The proof strategy hinges on establishing that DA's envy random digraph is sufficiently dense to preclude fragmentation into two large disjoint components lacking directed edges between them. 
	Such a partition would necessitate that no student in the first component had applied to any school matched to students in the second component during the execution of DA---an event whose probability decreases super-polynomially as $n$ approaches infinity.
	We formalize this intuition through a sequence of lemmas. First, we demonstrate that with high probability, any substantial subset of students generates a significant number of applications during the allocation process. 
	Subsequently, we establish that the probability of absence of cross-component applications between large partitions is vanishingly small. 
	The proof concludes by applying the union bound across all possible partitions, establishing the desired result.
	
	\paragraph{Bounding the Number of Applications.}
	We use the following simple consequence of standard random matching results: for every fixed finite $L$, the probability that a given
	student makes fewer than $L$ applications tends to zero. Hence, with high probability, all but an arbitrarily small fraction of
	students make at least $L$ applications. This fixed-$L$ bound is enough for the union bound below; $L=L(\epsilon)$ will be
	chosen sufficiently large.
	\begin{lemma}\label{lem:many-applications}
		Fix $\epsilon \in (0,1)$ and an integer $L \geq 1$.
		With high probability, all but at most $\epsilon n$ students each
		make at least $L$ applications in the DA process.
	\end{lemma}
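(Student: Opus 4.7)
The plan is to combine the geometric rank distribution established in Proposition~\ref{prop:geometric} with a concentration argument. The starting observation is that the number of applications made by student $i$ during the execution of DA equals $\rk_i(\da_i(P_n))$; hence bounding the number of students who make fewer than $L$ applications is equivalent to bounding the number who end up with rank strictly below $L$.

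First, I would compute the expectation. By Proposition~\ref{prop:geometric}, the probability that a given student is matched to a school of rank strictly less than $L$ is approximately $1-(1-1/H_n)^{L-1}$, which is at most $(L-1)/H_n$. Letting $X$ denote the number of students making fewer than $L$ applications, linearity of expectation gives $\ee[X] \le n(L-1)/H_n$. Choosing $L$ to be at most a suitable constant fraction of $H_n \sim \log n$ (the constant depending on $\epsilon$) yields $\ee[X] \le \tfrac{\epsilon}{2}n$, so that in expectation at most $\tfrac{\epsilon}{2}n$ students are ``quickly matched''.

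Second, I would upgrade this first-moment bound to a high-probability statement via a concentration inequality. The natural route is to implement DA through the McVitie--Wilson sequential algorithm under the principle of deferred decisions, revealing each preference and priority entry only when it is first needed. With high probability, the resulting random sequence has length $O(n\log n)$. I would then form a Doob-style exposure martingale for $X$ with respect to this sequence and apply a bounded-differences inequality such as McDiarmid's.

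The main obstacle is bounding the per-step influence of a single revelation on $X$, because changing one entry in a preference or priority list can in principle trigger a long rejection chain that resettles many students. To control this, I would invoke structural results for random matching markets in the spirit of \cite{immorlica2005} and \cite{ashlagi2017}, which show that the length of any such rejection chain is $\log^{O(1)}n$ with high probability. Plugging this chain-length bound into McDiarmid's inequality yields $\Pr[X \ge \epsilon n] \le \exp(-\Omega(n/\log^{O(1)} n))$, which is super-polynomially small and therefore delivers the required high-probability conclusion.
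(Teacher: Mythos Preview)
Your first-moment step is essentially what the paper does, but the paper stops there: having established that a single student makes fewer than $L$ applications with probability $\delta(n)\to 0$ (citing the classical Knuth--Pittel rank results rather than Proposition~\ref{prop:geometric}), it applies Markov's inequality directly to $X=\sum_i X_i$ to get $\Pr[X>\epsilon n]\le \delta(n)/\epsilon\to 0$. No martingale or bounded-differences machinery is invoked at all. Your route would, if it worked, yield exponentially small tails rather than the $O(1/\log n)$ failure probability that Markov gives, but the downstream use of the lemma in Corollary~\ref{cor:no-subsets-without-xapplications} only needs probability tending to~$1$, so that extra strength is never required.

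Your concentration step, however, has a real gap. McDiarmid needs a per-coordinate Lipschitz bound on $X$, and you propose to obtain one from rejection-chain length estimates ``in the spirit of'' \cite{immorlica2005} and \cite{ashlagi2017}. Neither reference supplies what you need in this model: \cite{immorlica2005} works with short (truncated) preference lists, and the structural results of \cite{ashlagi2017} concern \emph{unbalanced} markets. In the balanced, full-preference-list setting assumed here I am not aware of any result guaranteeing that an arbitrary rejection chain has length $\log^{O(1)} n$ with high probability, and without such a bound the Lipschitz constant you feed into McDiarmid---and hence the claimed $\exp\bigl(-\Omega(n/\log^{O(1)} n)\bigr)$ tail---is unjustified. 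The simpler Markov argument avoids this difficulty entirely and is all the lemma needs.
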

	\begin{proof}
		Let $\delta_n=\Pr(i\text{ makes fewer than }L\text{ applications})$ for a fixed student $i$. The cited random matching results imply
		that, for every fixed $L$, $\delta_n\to0$ as $n\to\infty$.\footnote{See \citet{knuth1976,pittel1989}; see also \citet[Theorem
			~3.2]{ashlagi2023welfare} for a characterization of rank distributions in any stable outcome.}
		Let $X_i$ be the indicator that student $i$ makes fewer than $L$ applications. Then
		\[
		\mathbb E\left[\sum_i X_i\right]=n\delta_n.
		\]
		By Markov’s inequality,
		\[
		\Pr\left(\sum_i X_i>\epsilon n\right)\le \frac{\delta_n}{\epsilon}\to0.
		\]
		Thus, with high probability, all but at most $\epsilon n$ students make at least $L$ applications.
	\end{proof}
	
	\paragraph{Partition Argument and Probability of No Cross-Applications.}
	Suppose that $G^{\da(P_n)}$ has two disjoint subsets $I_0, I_1 \subseteq I$, each of size at least $\epsilon n$, such that there is no directed edge from $I_0$ to $I_1$ in $G^{\da(P_n)}$. 
	By definition, $(i \to j)$ means student $i$ prefers $\da_j(P_n)$ to $\da_i(P_n)$. 
	So if there is no edge from $I_0$ to $I_1$, it means no student in $I_0$ prefers the school matched to any student in $I_1$ over its own match. 
	Equivalently, no student in $I_0$ applied to any school $\da_j(P_n)$ with $j\in I_1$ in the DA algorithm; otherwise, that application would produce an envy edge $i \to j$ if that school ultimately ended up matched to $j$.
	
	Hence, an equivalent statement is: If $G^{\da(P_n)}$ has an SCC of size \emph{at most} $(1-2\epsilon)n$, we can split $I$ into two blocks
	$I_0, I_1$, each at least $\epsilon n$, so that no student in $I_0$ ever applied to any school in $S_1 = \{\da_i(P_n) : i \in I_1\}$. 
	We first show that for a \emph{fixed} partition $(I_0,S_1)$, the probability of no-application across blocks is very small.
	
	\begin{lemma}\label{lem:no-crossapplication}
		Let $I_0 \subseteq I$ and $S_1 \subseteq S$ be subsets of students and schools with $|I_0|,|S_1|\ge \epsilon n$. 
		Then the probability that the students in $I_0$ in combination make at least $L n$ applications yet no one applies to any school in $S_1$ is at most
		$(1-\epsilon)^{Ln}$.
	\end{lemma}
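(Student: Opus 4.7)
The plan is to apply the principle of deferred decisions to the students' preference lists and iterate a simple one-step bound. Since each $\succ_i$ is a uniformly random permutation of $S$ and is independent of the priorities, I would reveal student $i$'s preference list one school at a time, in the exact order in which she applies during DA; conditional on the entire history so far, the next school on her list is then uniform over the schools she has not yet applied to.

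Enumerate the applications made by students in $I_0$ chronologically as $A_1, A_2, \dots$, and let $X_t \in \{0,1\}$ indicate whether $A_t$ targets a school in $S_1$. Define $E_t$ to be the event that at least $t$ applications by $I_0$ occur and that $X_1 = \cdots = X_{t-1} = 0$. On $E_t$, the student making $A_t$ has so far applied only to schools in $S \setminus S_1$; if she has made $k$ previous applications, her next school is uniform over a set of size $n-k$ that still contains all $|S_1|\ge \epsilon n$ elements of $S_1$, so
\begin{equation*}
\Pr[\,X_t = 0 \mid \mathcal{F}_{t-1}\,] \;\le\; 1 - \frac{\epsilon n}{n-k} \;\le\; 1 - \epsilon \quad \text{on } E_t,
\end{equation*}
where $\mathcal{F}_{t-1}$ is the sigma-algebra of the DA history immediately before $A_t$. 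Taking expectations and noting that $E_{t+1}\subseteq E_t \cap \{X_t=0\}$ yields $\Pr[E_{t+1}] \le (1-\epsilon)\Pr[E_t]$; iterating from $t=1$ bounds the event of the lemma---which is exactly $E_{Ln}\cap\{X_{Ln}=0\}$---by $(1-\epsilon)^{Ln}$.

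The only genuine subtlety, rather than a real obstacle, is to confirm that the deferred-decisions construction remains valid when the identity of the next applicant is chosen adaptively by the DA dynamics. This is fine because priorities are drawn independently of preferences and the choice of who applies next at each step is measurable with respect to $\mathcal{F}_{t-1}$, so revealing one more school from student $i$'s list at that moment still yields a uniform draw over her unexplored schools, regardless of how $i$ was selected.
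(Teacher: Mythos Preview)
Your argument is correct. Both your proof and the paper's rest on the principle of deferred decisions, but you take a slightly different route. The paper first passes to Knuth's \emph{amnesiac} DA, in which every application is an i.i.d.\ uniform draw from $S$ (with replacement); it then couples the $I_0$ applications to a pre-generated i.i.d.\ sequence $\sigma$, so that the target event implies $\sigma_1,\ldots,\sigma_{Ln}\notin S_1$ and the bound $(1-\epsilon)^{Ln}$ drops out of a pure product calculation. You instead stay in the original DA and exploit that, conditional on the history, the next school on a student's list is uniform over her \emph{unexplored} schools; on the event $E_t$ all of $S_1$ is still unexplored, so the one-step miss probability is at most $1-\epsilon$, and a filtration/iteration argument finishes. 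Your route avoids introducing the amnesiac coupling at the price of a bit more care with the stopping-time/filtration structure; the paper's route is cleaner once the coupling is in place, and it sidesteps the without-replacement bookkeeping entirely. One minor imprecision: the event of the lemma (no $I_0$ student \emph{ever} applies to $S_1$) is contained in, not equal to, $E_{Ln}\cap\{X_{Ln}=0\}$; but containment is all you need for the upper bound.
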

	
	\begin{proof}
		If applications were all mutually independent, the claim would be a simple consequence of standard binomial concentration. 
		In DA, however, the applications depend on the history of rejections and hence see slight correlation. 	
		Instead, we consider a version of deferred acceptance with redundant applications (what \cite{knuth1976} calls the \emph{amnesiac algorithm}) where we generate students' preferences in a deferred manner and direct each application-to-be-made to a uniformly random school independently. That is, we allow a student to re-apply to a school that has already rejected her (and hence will do so again). 
		To get her true preferences, we simply remove the duplicates from the sequence of applications she will ever make. 
		It is easy to see that this version of amnesiac DA can be coupled exactly to the conventional DA to yield the same outcome.
		
		In the amnesiac algorithm, each student makes weakly more applications.
		Thus, it suffices to consider the event that the students in $I_0$ in combination make at least $L n$ applications in the amnesiac algorithm yet no one applies to any school in $S_1$.
		
		To understand this probability, we use the following coupling trick to handle the preferences (and applications) of students in $I_0$:
		First, generate an (infinite) sequence of i.i.d.\ uniform samples $\sigma=(\sigma_1,\sigma_2,\ldots)$ from the set of schools $S$ independent of the preferences of all schools and all students outside $I_0$;
		Then, run DA with redundant applications, and whenever an application is to be sampled from a student $i\in I_0$, supply with the next element from $\sigma$. We can verify inductively that, at any stage, this coupling yields the same distribution (as in the amnesiac algorithm) for the next applications to make---simply uniform and independent of all previous events.
		Our target no-cross-application event in the lemma statement implies that (1) at least $Ln$ elements from $\sigma$ are read before DA terminates and (2) none of these elements belong to the set $S_1$; these two combined imply that $\sigma_j \notin S_1$ for all $j=1,\ldots, Ln$. By construction, the sequence $\sigma_1,\ldots,\sigma_{Ln}$ are i.i.d.\ samples independent of all other sources of randomness, and thus $\Pr(\sigma_j\notin S_1 \;\forall j\in[Ln]) = (\Pr(\sigma_1\notin S_1))^{Ln}$.
		Since $|S_1| \ge \epsilon n$, we have $\Pr(\sigma_1\notin S_1) \le 1-\epsilon$, and our desired upper bound of $(1-\epsilon)^{Ln}$ follows immediately.
	\end{proof}

	\paragraph{Union Bound over All Subsets}
	
	We now combine Lemma \ref{lem:no-crossapplication} with a union bound over all possible realizations of $I_0$ and $S_1$ to obtain the following corollary, from which the proof of Theorem \ref{thm:scc} follows naturally.
	\begin{corollary}\label{cor:no-subsets-without-xapplications}
		Fix $\epsilon > 0$. With high probability, there does not exist a subset of students $I_0\subseteq I$ and a subset of schools $S_1 \subseteq S$ such that (1) both subsets are of size at least $\epsilon n$, and (2) no one in $I_0$ ever applied to any school in $S_1$ in DA.
	\end{corollary}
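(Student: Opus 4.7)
The plan is to combine Lemmas \ref{lem:many-applications} and \ref{lem:no-crossapplication} via a union bound over all partitions into two large subsets. Fix $\epsilon > 0$, and let $L$ be a large absolute constant to be chosen later. First I would apply Lemma \ref{lem:many-applications} with parameter $\epsilon/2$ and threshold $L$ to obtain an event $A_L$, holding with high probability, on which at most $(\epsilon/2)n$ students make fewer than $L$ applications during DA. On $A_L$, any subset $I_0 \subseteq I$ with $|I_0| \ge \epsilon n$ contains at least $(\epsilon/2)n$ students who each make at least $L$ applications, so the students in $I_0$ collectively make at least $\tfrac{L\epsilon}{2}\, n$ applications.

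Next, for a \emph{fixed} pair $(I_0, S_1)$ with $|I_0|, |S_1| \ge \epsilon n$, Lemma \ref{lem:no-crossapplication} (applied with $L\epsilon/2$ in place of $L$) bounds the probability that $I_0$ makes at least $\tfrac{L\epsilon}{2}\, n$ applications yet no one in $I_0$ applies to any school in $S_1$ by $(1 - \epsilon)^{(L\epsilon/2)\, n}$. Combined with the definition of $A_L$, the probability that a specific pair $(I_0, S_1)$ witnesses no cross-application is at most $(1-\epsilon)^{(L\epsilon/2)\, n} + \Pr(A_L^{c})$.

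Finally, I would take a union bound over the at most $2^n \cdot 2^n = 4^n$ candidate subset pairs, yielding a total failure probability of at most $4^n (1 - \epsilon)^{(L\epsilon/2)\, n} + o(1)$. Choosing $L$ large enough that $L\,\epsilon\, \log\!\bigl(1/(1-\epsilon)\bigr) > 4 \log 2$ forces the geometric term to decay exponentially in $n$, completing the proof. The main obstacle is conceptual rather than computational: one must first lower-bound the total applications originating from $I_0$ (via Lemma \ref{lem:many-applications}) before invoking Lemma \ref{lem:no-crossapplication}, whose bound is only useful when paired with such a lower bound; once this coupling is in place, calibrating $L$ to overcome the $4^n$ factor from the union bound is routine.
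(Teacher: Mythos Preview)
Your proposal follows essentially the same strategy as the paper: guarantee that every large $I_0$ generates many applications via Lemma~\ref{lem:many-applications}, bound the no-cross-application probability for a fixed pair via Lemma~\ref{lem:no-crossapplication}, and union-bound over the at most $4^n$ subset pairs, choosing $L$ large enough to beat the entropy factor.

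One presentational slip to fix: as written, you first bound the per-pair probability by $(1-\epsilon)^{(L\epsilon/2)n} + \Pr(A_L^{c})$ and \emph{then} union-bound over $4^n$ pairs, which would formally produce a $4^n\Pr(A_L^{c})$ term---and Lemma~\ref{lem:many-applications} gives no exponential decay for $\Pr(A_L^{c})$. The correct bookkeeping (which your final expression $4^n(1-\epsilon)^{(L\epsilon/2)n}+o(1)$ already reflects) is to split off the global event once before the union bound:
\[
\Pr\Bigl(\,\bigcup_{I_0,S_1}\mathcal E_{I_0,S_1}\Bigr)\ \le\ \Pr(A_L^{c})\ +\ \sum_{I_0,S_1}\Pr\bigl(\mathcal E_{I_0,S_1}\cap A_L\bigr).
\]
This is exactly how the paper organizes the argument (with the cosmetic difference that the paper uses a per-$I_0$ event $\mathcal A_{I_0}$ in place of your single global event $A_L$).
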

	\begin{proof}
		For any $I_0\subseteq I$ and $S_1 \subseteq S$ satisfying the cardinality condition, let $\mathcal{E}_{I_0,S_1}$ denote the event that no one in $I_0$ ever applied to any school in $S_1$ in DA. Proving the corollary reduces to bounding the probability of the union event 
		\begin{equation*}
			p := \Pr\bigg(\bigcup_{|I_0|,|S_1|\ge \epsilon n}\mathcal{E}_{I_0,S_1}\bigg).
		\end{equation*}
Let $\mathcal{A}_{I_0}$ denote the event that students in $I_0$ make a combined number of at least $L n$ applications (including redundant ones), for some constant $L$ to be specified later.
		By a union bound, we have
	\begin{equation}\label{eq:union-bound-prob}
		p \le
		\sum_{|I_0|,|S_1|\ge \epsilon n}
		\Pr(\mathcal{E}_{I_0,S_1}\cap \mathcal{A}_{I_0})
		+
		\Pr\left(
		\bigcup_{|I_0|\ge \epsilon n}\mathcal{A}_{I_0}^c
		\right).
	\end{equation}

	Thus, the summation term in the right-hand side of expression \eqref{eq:union-bound-prob} is at most
	\[
	(2^n)^2 \cdot (1-\epsilon)^{Ln} 
	= \exp\!\big((2\log 2 + L \log(1-\epsilon))n\big).
	\]
	This tends to zero as $n \to \infty$, provided we choose $L=L(\epsilon)$ 
	large enough so that $2\log 2 + L \log(1-\epsilon) < 0$.
	
	By Lemma~\ref{lem:many-applications}, applied with $\epsilon/2$ and with $\lceil 2L/\epsilon\rceil$ in place of $L$, with high probability all but at most $\epsilon n/2$ students each make at least $2L/\epsilon$ applications. Hence any subset $I_0$ with $|I_0|\ge \epsilon n$ contains at least $\epsilon n/2$ such students, and therefore students in $I_0$ make at least
	\[
	(2L/\epsilon)\cdot(\epsilon n/2)=Ln
	\]
	applications in total. Thus,
	\[
	\Pr\left(\bigcup_{|I_0|\ge \epsilon n}\mathcal{A}_{I_0}^c\right)\to0.
	\]
		Combining the bounds on the two terms gives an upper bound on $p$ that 	tends to zero as $n\to\infty$, finishing the proof.
	\end{proof}
	
	We now prove Theorem~\ref{thm:scc} using Corollary~\ref{cor:no-subsets-without-xapplications}.
	
\begin{proof}[Proof of Theorem \ref{thm:scc}]
	Note that DA with redundant applications produces the exact same matching and envy graph (modulo duplicate edges due to redundancy).
	Thus, we may use DA with redundant applications as our underlying data generation algorithm.
	Given that $\epsilon$ is arbitrary, it suffices to show that the largest SCC contains at least $(1-2\epsilon)n$ students with high
	probability.
	Consider the condensation DAG of $G^{\da(P_n)}$, whose vertices are the strongly connected components of $G^{\da(P_n)}$. Suppose,
	toward a contradiction, that no SCC contains at least $(1-2\epsilon)n$ students. Take a topological ordering of the condensation
	DAG. Let $B$ be the union of the first SCCs in this ordering until its size first reaches $\epsilon n$. Since each SCC has size
	less than $(1-2\epsilon)n$, both $B$ and $I\setminus B$ have size at least $\epsilon n$. Moreover, by the topological ordering,
	there is no edge from $I\setminus B$ to $B$.
	Set $I_0=I\setminus B$ and $I_1=B$, and let
	\[
	S_1:=\{\da_i(P_n):i\in I_1\}.
	\]
	Then $|I_0|,|I_1|\ge\epsilon n$, $|S_1|\ge\epsilon n$, and no student in $I_0$ has applied to any school in $S_1$. This contradicts
	Corollary~\ref{cor:no-subsets-without-xapplications} with probability tending to one. Hence, with high probability, some SCC
	contains at least $(1-2\epsilon)n$ students. Since $\epsilon$ is arbitrary, the theorem follows.
\end{proof}
	Theorem \ref{thm:scc} yields a surprising Corollary, namely:
	\begin{corollary}
		The fraction of students who are unimprovable converges to zero in large markets, and the DA mechanism produces a Pareto-inefficient matching with high probability. 
	\end{corollary}
	
	Theorem \ref{thm:scc} implies that the set of improvable students forms a substantial majority of the entire student population, but is silent regarding how many of them can be improved by the same mechanism, nor does it specify which mechanism chooses the maximum improvement over DA. Our next Theorem tackles both questions.
	
	\subsection{Equivalence among Mechanisms that Dominate DA}
	
	We define a \textit{cycle packing} $H$ as a union of pairwise-disjoint cycles in $G$, with the property that no additional cycles exist in the subgraph $G\backslash H$. $V(H)$ denotes the vertex set of $H$, or equivalently, the set of nodes that are in some cycle in $H$. Note that, by Proposition \ref{thm:carachterization}, every efficient mechanism corresponds to a cycle packing of DA's envy digraph. Our second Theorem shows that, with high probability, every efficient mechanism improves almost all nodes.
	
	\begin{theorem}\label{thm:cyclepacking}
		With high probability, every cycle packing of the envy digraph $G^{\da(P_n)}$ covers at least $(1-\epsilon)n$ nodes, for any arbitrary constant $\epsilon>0$.
	\end{theorem}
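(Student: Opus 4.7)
The plan is to leverage the density of DA's envy digraph---captured by Corollary \ref{cor:no-subsets-without-xapplications}---to show that any maximal cycle packing must cover almost all nodes of $G^{\da(P_n)}$.

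First, I would observe that the maximality built into the definition of a cycle packing forces the induced subgraph $G^{\da(P_n)}[V']$ on the uncovered set $V' := I \setminus V(H)$ to be acyclic: any cycle lying entirely in $V'$ would be vertex-disjoint from every cycle in $H$ and could be appended to $H$, contradicting the fact that no additional cycles exist in the residual. Hence $G^{\da(P_n)}[V']$ is a DAG and admits a topological ordering of its vertices.

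The core step is then a partition argument mirroring the one used for Theorem \ref{thm:scc}. Suppose for contradiction that $|V'| > 2\epsilon n$. Let $V'_0$ consist of the first $\epsilon n$ vertices in the topological order, and let $V'_1$ consist of the last $\epsilon n$. By the topological ordering, no edge in $G^{\da(P_n)}[V']$ points from $V'_1$ to $V'_0$; since both endpoints lie in $V'$, the same is true in the full graph $G^{\da(P_n)}$. By definition of the envy digraph, this means no student in $V'_1$ envies any student in $V'_0$, and so no student in $V'_1$ ever applied, during the DA execution, to any school in $S_0 := \{\da_j(P_n) : j \in V'_0\}$. Because DA is a bijection between $I$ and $S$ in a balanced random market, $|S_0| = |V'_0| = \epsilon n$. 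I would then invoke Corollary \ref{cor:no-subsets-without-xapplications} with $I_0 = V'_1$ and $S_1 = S_0$: both subsets have size at least $\epsilon n$, and the corollary guarantees that with high probability no such pair exists. This contradicts the assumption, so with high probability every maximal cycle packing satisfies $|V(H)| \geq (1-2\epsilon)n$; since $\epsilon > 0$ is arbitrary, the theorem follows.

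The main obstacle is essentially discharged by Corollary \ref{cor:no-subsets-without-xapplications}, which combines a super-polynomial tail bound on no-cross-applications with a union bound over partitions. The genuinely new ingredient here is the clean graph-theoretic reduction: maximality forces the complement of the packing to carry a topological order, and picking the extreme slabs of that order converts the hypothesis ``the cycle packing leaves many nodes uncovered'' into precisely the bipartite no-envy configuration that Corollary \ref{cor:no-subsets-without-xapplications} rules out.
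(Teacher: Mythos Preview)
Your proposal is correct and follows essentially the same route as the paper: both arguments observe that maximality makes the induced subgraph on the uncovered vertices a DAG, take a topological ordering, split it into two linear-sized blocks with no edges across in one direction, and then invoke Corollary~\ref{cor:no-subsets-without-xapplications} to rule this out. The only cosmetic differences are in bookkeeping---the paper splits the entire uncovered set into two halves of size at least $\lfloor \epsilon n/2\rfloor$ each, whereas you take the first and last $\epsilon n$ vertices under the assumption $|V'|>2\epsilon n$---and in the direction of the topological order, but the substance is identical.
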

	
	\begin{proof}
		Let $H\subseteq G^{\da(P_n)}$ be a cycle packing. Consider the induced subgraph $G^{\da(P_n)}\backslash H = G^{\da(P_n)}[I\backslash V(H)]$, which by definition, contains no cycles.
		
		Suppose, for contradiction, that $|V(H)| < (1-\epsilon)n$. Then, the directed acyclic subgraph $G^{\da(P_n)}\backslash H$ contains $N = |I\backslash V(H)| > \epsilon n$ vertices. Hence, there exists a topological ordering $i_1, \dots, i_N$ of vertices in $I\backslash V(H)$ such that no directed edge from $i_k$ to $i_\ell$ for $1 \leq k < \ell \leq N$ is present in $G^{\da(P_n)}\backslash H$ (and hence $G^{\da(P_n)}$). 
		
		Partitioning these nodes into $I_0 = \{i_1,\dots,i_{\lfloor N/2 \rfloor}\}$ and $I_1 = \{i_{\lfloor N/2 \rfloor + 1}, \dots, i_N\}$ yields two subsets, each with size at least $\lfloor\epsilon n/2\rfloor$, and crucially, no edges in $G^{\da(P_n)}$ go from $I_0$ into $I_1$. This scenario is identical to the one in Theorem \ref{thm:scc}, which is improbable due to Corollary~\ref{cor:no-subsets-without-xapplications}. Consequently, we must have $|V(H)| \geq (1-\epsilon)n$ with high probability.
	\end{proof}
	
	Theorem \ref{thm:cyclepacking} establishes not merely the prevalence of improvable students but also demonstrates that nearly all such students can concurrently benefit from improvement via non-intersecting trading cycles. 
	Furthermore, Theorem \ref{thm:cyclepacking} establishes the asymptotic equivalence across all such mechanisms regarding the number of students they improve. This result has significant implications for policy implementation. 
	When policymakers deliberate on which efficiency-enhancing mechanism to adopt as an alternative to DA, concerns about differential distributions of who gains by the adoption become largely irrelevant, because all such mechanisms ultimately benefit approximately the same number of students in large markets.
	This equivalence result is particularly striking given that in carefully constructed small instances, different efficient mechanisms that dominate DA may significantly differ in their corresponding number of improved students \citep{knipe2025improvable}. However, we want to stress that our equivalence results do not imply that the size of the improvement is the same across efficient mechanisms that weakly Pareto-dominate DA, nor their equivalence with regard to stability criteria such as number of blocking pairs.\footnote{However, all efficient mechanisms (whether they dominate DA or not) have an asymptotically equivalent \emph{normalized} average rank \citep{che2018payoff}.}

	\subsection{Simulations and Extensions}
	
	The simulations depicted in Figure \ref{fig:scc1} show the emergence of a unique giant component in $G^{\da(P_n)}$ for even relatively small values of $n$.
	\begin{figure}[h!]
		\centering
		\includegraphics[width=\textwidth]{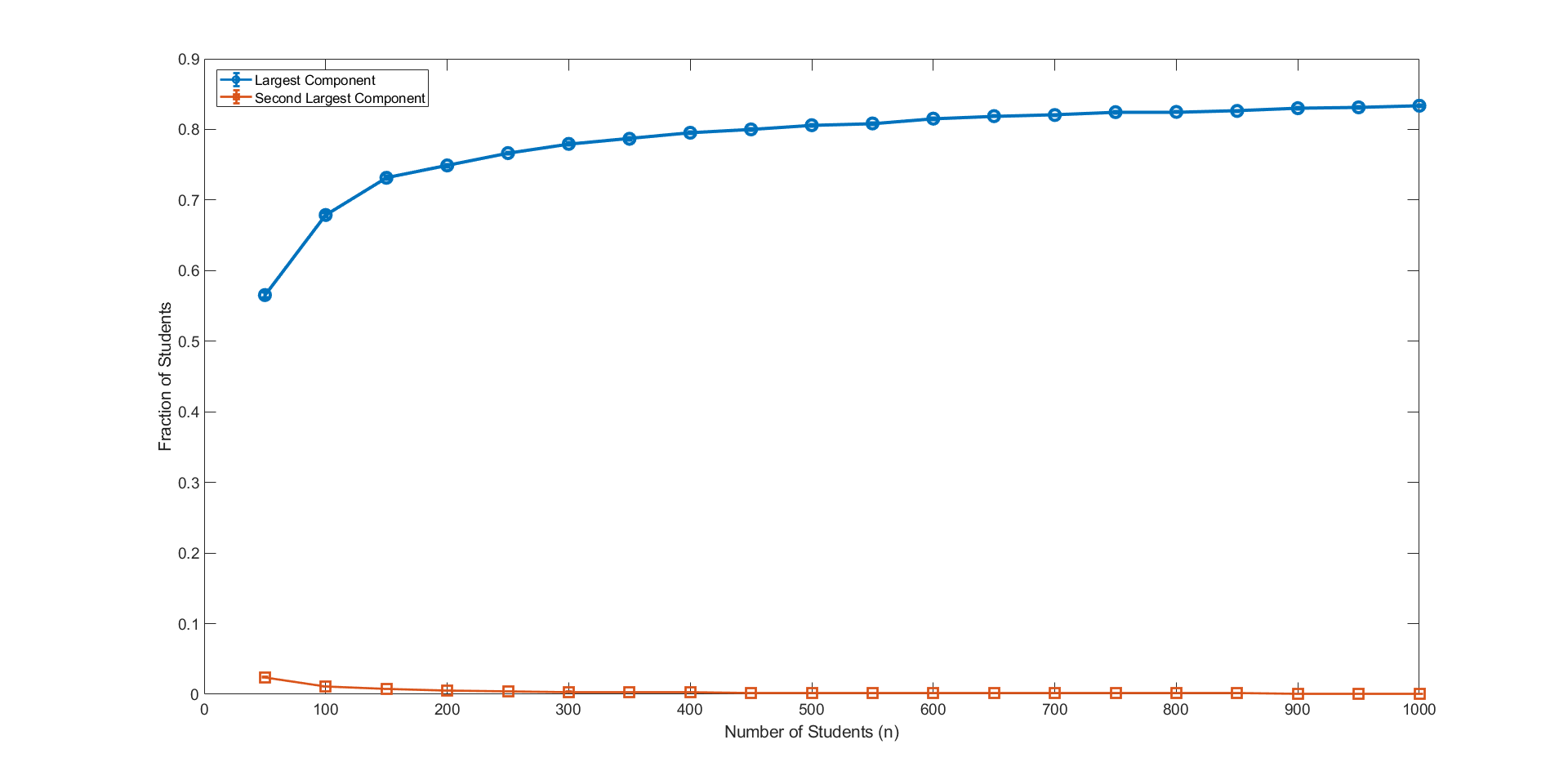} 
		\caption{Average fraction of nodes in the largest (blue) and second-largest (red) SCCs (average over 2,000 random envy digraphs for each $n$).}
		\label{fig:scc1}
	\end{figure}
	
	The simulations in Figure \ref{fig:unimp2} illustrate the convergence of unimprovable students to zero as market size increases. The convergence rate is approximately $O(1/\log n)$, directly corresponding to the rate at which students receive their top choice under DA. This theoretical relationship explains the gradual nature of the decline.
	\begin{figure}[h!]
		\centering
		\includegraphics[width=\textwidth]{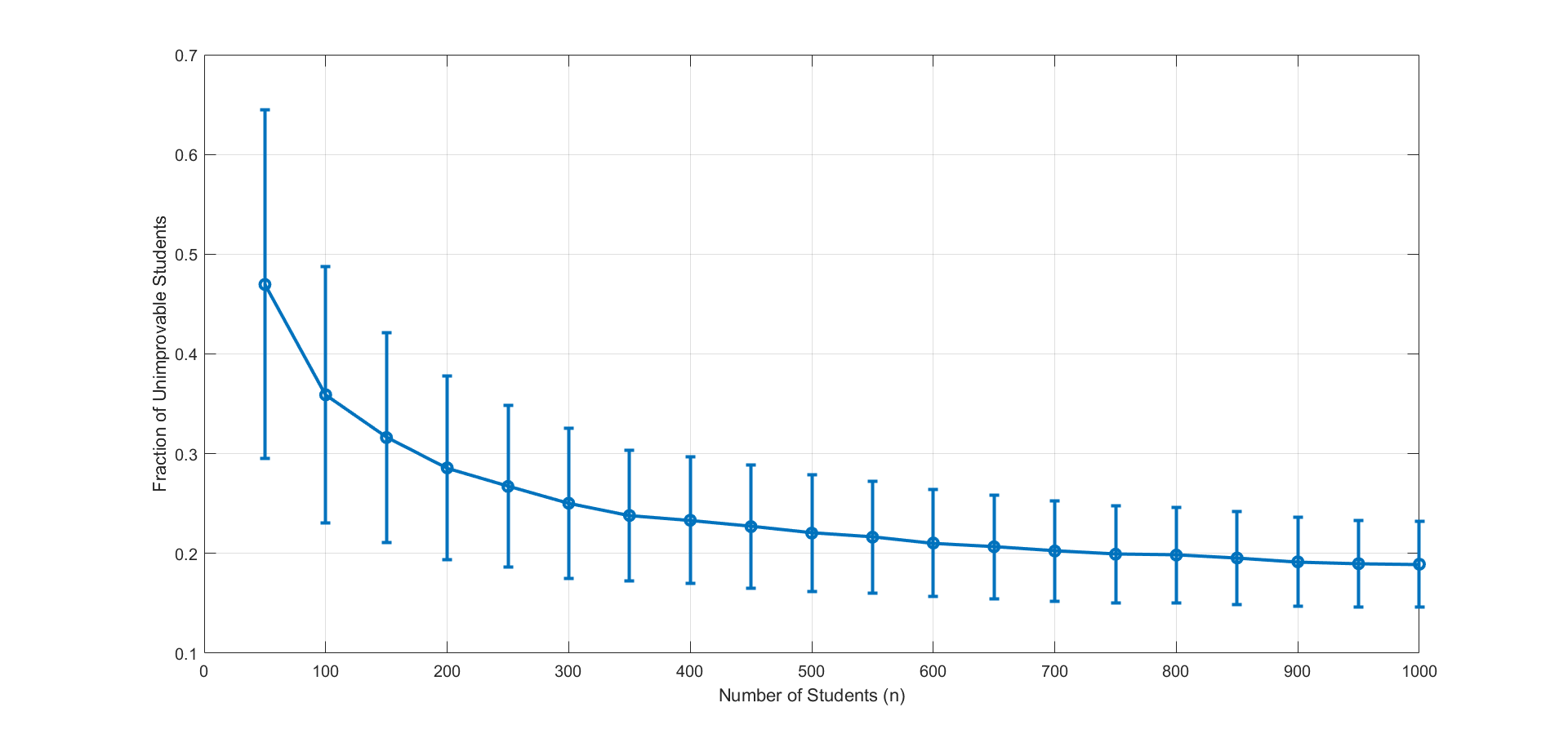} 
		\caption{Average fraction of unimprovable students (average over 2,000 random problems for each $n$).}
		\label{fig:unimp2}
	\end{figure}
	
	In the Supplementary Appendix, we discuss \new{three} extensions of our work.
	\begin{enumerate}
		\item Many-to-one Matching: Our probabilistic analysis has assumed that the number of schools and students is the same, yet our main results (Theorems \ref{thm:scc} and \ref{thm:cyclepacking}) can be extended to a model with $n$ schools, each with a quota $q$, and $qn$ students. In this model, each student makes fewer applications, but each rejection makes him envy $q$ instead of only 1 student. This trade-off between application frequency and envy multiplicity results in an envy digraph that becomes even more densely connected. Consequently, the emergence of a unique giant strongly connected component remains a robust phenomenon, as well as the asymptotic size equivalence among efficient mechanisms that dominate DA. However, convergence occurs more slowly as $q$ increases. We provide proofs and simulations in the Supplementary Appendix.
		
		\item Correlated Preferences:
			In Appendix \ref{app:appendixtwo}, we show that our results are robust to correlated preferences well beyond the i.i.d.\ uniform benchmark. We first consider tiered preferences in the spirit of \citet{ashlagi2020tiered}: schools are partitioned into a finite number of quality tiers, and preferences are independent within each tier. We prove that the envy digraph then contains one giant strongly connected component within each tier (hence as many giant SCCs as tiers), and simulations indicate that the union of these giant components covers an asymptotic fraction of students close to one.
			
			We then state sufficient conditions under which the giant-SCC proof extends to general correlated-preference models. These conditions capture the two key ingredients needed: large sets of students continue to make enough proposals, and their top choices expand uniformly across large sets of schools. We show that bounded rank-ordered logit preferences satisfy both conditions, and hence the giant-SCC result follows for bounded rank-ordered logit markets. Finally, simulations under additive common-and-idiosyncratic utilities show that moderate correlation can increase the size of the giant SCC relative to independence, reinforcing that the pervasiveness of improvable students is not an artifact of iid preferences.
		
\item \new{Correlated Priorities: In Appendix \ref{app:appendixthree}, we use simulations to examine the role of correlation in schools' priorities. Unlike correlation in preferences, which tends to strengthen our results, correlation in priorities can reduce the size of the (still unique) giant SCC. The intuition is that correlated priorities create a de facto hierarchy among students: high-priority students are accepted quickly with few applications, while rejections concentrate on low-priority students, reducing the potential for cycles. Our simulations show that the giant SCC remains substantial even under moderate priority correlation, with over half of students remaining improvable. When both preferences and priorities are moderately correlated, the preference effect dominates and the giant SCC remains larger than under the iid baseline.\footnote{\new{It is well-known that if all schools share the same priorities, or more generally if priorities are acyclic, DA is Pareto efficient \citep{ergin2002efficient,kesten2006two}. In Appendix \ref{app:role-priorities}, we show that the iid-priority benchmark is far from this acyclic case: the probability of Ergin acyclicity vanishes super-exponentially fast.}}}
	\end{enumerate}

\subsection{Scope and Interpretation}
Our results show that stable mechanisms fail even the weakest efficiency requirement: in large markets, almost all students can be made strictly better off without harming anyone else. This highlights a fundamental inefficiency of stability. At the same time, our framework is intentionally silent about the magnitude of the gains. Because we work in the standard Gale--Shapley model with purely ordinal preferences, we do not evaluate how large the improvements are in utilitarian terms. Addressing that question requires additional structure on cardinal utilities and aggregation, as in \citet{lee2014efficiency} and \citet{che2019efficiency}. Our findings should therefore be interpreted as complementary to that literature. We establish that Pareto improvements are pervasive, while the size of their welfare impact depends on the underlying cardinal environment.

Finally, allowing indifferences in preferences mainly affects whether improvements are weak or strict. In such environments, some students in a trading cycle may move within the same indifference class, while others strictly gain. Our results continue to imply Pareto inefficiency with high probability, but the exact decomposition between weak and strict improvements depends on the structure generating ties, and lies beyond the scope of our ordinal model. A similar remark applies to priority ties. When priorities must be broken by lottery, different tie-breaking rules generate different stable matchings, and some students may end up strictly worse than under alternative tie breaks \citep{erdil2008s}. In this sense, our baseline model with strict priorities provides a conservative lower bound on the inefficiencies that arise in practice.
	
\subsection{Empirical Perspective}
Since any student receiving her top choice is necessarily unimprovable, empirical top-choice shares provide a natural lower bound on the fraction of unimprovable students in real markets. These vary widely across markets. 
Examples include
Barbados (6–13\%; \citealp{pariguana}), New York City (41\%; \citealp{abdulkadirouglu2009strategy}), Budapest (47\%; \citealp{ortega2023cost}), Chile (52\%;  \citealp{correa2022school}), and Boston (73\%;  primarily for primary schools; \citealp{abdulkadiroglu2006changing}).

However, these statistics are strongly affected by self-selection in submitted rank lists \citep{chen2019self,pariguana}. When true preferences are recovered, top-choice shares are substantially lower---for example, 11\% in Budapest \citep{ortega2023cost}---which is consistent with our asymptotic predictions.
Direct estimates of the fraction of students lying on cycles in the DA envy digraph are largely absent from existing data, highlighting a promising direction for future empirical work.\footnote{\cite{abdulkadirouglu2009strategy} find that roughly 2\% of students in NYC could improve their DA placement via stable improvement cycles. However, this figure reflects only students who obtain a better placement under one specific mechanism that dominates DA but which does not guarantee full Pareto efficiency.}

\section{Conclusion}
\label{sec:conclusion}

We show that Deferred Acceptance (DA) is highly Pareto inefficient in large markets: as the market grows, trading cycles in the envy digraph cover almost all students, so that nearly every student becomes improvable relative to the DA outcome.
Because every Pareto improvement upon DA is also a Pareto improvement upon any other stable matching (since DA selects the student-optimal stable outcome) this conclusion immediately extends beyond DA. Hence, our results imply that in large markets all stable mechanisms admit Pareto improvements for almost all students.

Finally, while these findings motivate considering efficiency adjustments that relax stability, our outcome-equivalence theorem indicates that such debates should be guided by properties other than the number of beneficiaries: leading stable-dominating approaches generate essentially the same set of improved students in large markets, so the relevant distinctions lie in other institutional and normative criteria.

\section*{Acknowledgements}
We are grateful to three anonymous reviewers and the Editor for their detailed comments which have improved our paper. We also acknowledge helpful comments from Mariagiovanna Baccara, Li Chen, David Delacrétaz, Battal Doğan, Matt Elliott, Arda Gitmez, Thilo Klein, Jörgen Kratz, SangMok Lee, Vikram Manjunath, Martin Meier, Alexander Nesterov, Marco Pariguana, Szilvia Pápai, Fedor Sandomirskiy, Qianfeng Tang, Olivier Tercieux, Bertan Turhan, M. Utku Ünver, Alexander Westkamp, and M. Bumin Yenmez.

	\singlespacing      
	\setlength{\parskip}{-0.2em} 
	\bibliographystyle{te} 
	\bibliography{bibliogr}
	\newpage
	\appendix
	\setcounter{secnumdepth}{3}
	\section*{}\label{appendix}
	\subsection{Extension to Many-to-One Matching}
\label{app:appendixone}	
	In the main text, we have shown that a giant unique SCC appears in $G^{\da(P_n)}$ if there are as many schools as students. We relax such assumption here, allowing for a school to be able to admit many students, as follows.
	
	\begin{theorem}
		\label{thm:many-to-one-scc}
		Consider a random school choice problem $P_n$ with $n$ schools, where each school has the same fixed quota $q$, and $nq$ students. Fix an arbitrary $\epsilon > 0$. With high probability, the envy digraph $G^{\da(P_n)}$ has a unique giant strongly connected component containing at least $(1 - \epsilon)nq$ students.
	\end{theorem}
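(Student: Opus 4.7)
The plan is to mirror the proof of Theorem \ref{thm:scc}, adapting each of the three supporting lemmas to the many-to-one setting with $|I|=nq$ students, $|S|=n$ schools, and common quota $q$. The key structural observation that makes the adaptation painless is that, since each school holds at most $q$ students, any subset $I_1 \subseteq I$ of size at least $\epsilon nq$ must be matched under $\da(P_n)$ across at least $\epsilon n$ distinct schools, i.e.\ $S_1 := \{\da_i(P_n) : i \in I_1\}$ satisfies $|S_1| \ge |I_1|/q \ge \epsilon n$. This preserves the $\Theta(n)$ scaling of $|S_1|$ that drives the original union bound.

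First I would restate Lemma \ref{lem:many-applications} with $nq$ students in place of $n$: with high probability, all but $\epsilon nq$ students make at least $L$ applications, for any fixed $L$. The proof transfers directly once one observes that in a balanced random many-to-one market, a single student still needs $\Omega(\log n)$ applications in probability (via McVitie-Wilson run one student at a time, combined with the coupon-collector-with-multiplicity-$q$ analog); Markov's inequality then yields the aggregate bound. Next I would adapt Lemma \ref{lem:no-crossapplication}: the amnesiac-DA coupling is unchanged, each application being the next term of an i.i.d.\ uniform-over-$S$ sequence, so the probability that none of the first $Lnq$ such samples lands in a fixed $S_1$ with $|S_1| \ge \epsilon n$ is at most $(1-\epsilon)^{Lnq}$. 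Finally, the union bound runs over at most $2^{nq}$ choices of $I_0$ and $2^n$ choices of $S_1$, giving a total probability of at most
\begin{equation*}
2^{(q+1)n}(1-\epsilon)^{Lnq} = \exp\bigl(((q+1)\log 2 + L q \log(1-\epsilon))\,n\bigr),
\end{equation*}
which vanishes once $L$ is chosen sufficiently large relative to $q$ and $\epsilon$.

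To conclude the giant-SCC statement, I would argue exactly as in the original proof: if the largest SCC contains fewer than $(1-2\epsilon)nq$ students, then disjoint $I_0, I_1 \subseteq I$ each of size at least $\epsilon nq$ exist with no edges from $I_0$ to $I_1$; by the structural observation above the corresponding $S_1$ has size at least $\epsilon n$, and the many-to-one analog of Corollary \ref{cor:no-subsets-without-xapplications} then rules this out with high probability. The main obstacle I expect is verifying the many-to-one version of Lemma \ref{lem:many-applications}, since rank distributions in balanced many-to-one DA are less cleanly available than in the one-to-one case. However, the threshold $L$ needed is any fixed constant (not growing with $n$), and this weak bound follows from the observation that with high probability a constant fraction of schools fill their quota of $q$ early in the DA run, so a typical student's first few applications are rejected.
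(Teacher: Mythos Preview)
Your proposal is correct and follows the same three-step scaffold as the paper (application-count lemma, no-cross-application lemma via the amnesiac coupling, union bound over partitions), including the key structural observation that $|S_1| \ge |I_1|/q \ge \epsilon n$. The one substantive difference is in the quantitative bookkeeping: you count the relevant pairs $(I_0,S_1)$ as $2^{nq}\cdot 2^{n}=2^{(q+1)n}$ and therefore only need each student to make a \emph{fixed constant} number $L$ of applications to close the union bound. The paper instead overcounts the pairs by $2^{2nq}$ and, to compensate, proves a stronger application-count lemma---all but $\epsilon nq$ students make at least $\sqrt{\log n}$ applications---via a ``smooth state'' argument adapted from \cite{ashlagi2020tiered} (hold out student $i$, run DA on the rest, then show $i$'s acceptance probability is $O(q/\log n)$). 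Your tighter counting makes this extra strength unnecessary; conversely, the paper's Lemma fully settles the many-to-one application bound that you correctly flag as the main gap in your sketch, so either route closes the argument.
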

	
	\begin{proof}
		We follow a strategy similar to Theorem \ref{thm:scc}, adapting it to the many-to-one setting. First, we establish bounds on the number of applications in the DA algorithm, then analyze partitions of the resulting envy digraph.
		
		\paragraph{Step 1: Bounds on applications.} 
		
		We first establish the following bound on the number of applications a typical student makes, similar to Lemma~\ref{lem:many-applications}:
		
		\begin{lemma}\label{lem:many-applications-many-to-one}
			Fix $\epsilon>0$.
			With high probability, all but at most $\epsilon nq$ students each make at least $\sqrt{\log n}$ applications.
		\end{lemma}
		
		\begin{proof}
			Again, it suffices to show that for a fixed student $i$, the number of applications she makes is at least $\sqrt{\log n}$ with high probability; the lemma follows as a consequence of Markov's inequality applied to the number of students making fewer applications than this, same as we saw in the proof of Lemma~\ref{lem:many-applications}.
			
			The standard approach for one-to-one matchings using the coupon collector problem as an approximation can be adapted to this case with slight modifications.
			In particular, one can establish that the total number of applications throughout DA is of order $\Theta(n \log n)$. 
			Since the order of applications in DA is irrelevant, one can hold off $i$ and run DA on the rest of the students $I\backslash\{i\}$, and when this first stage terminates (i.e., when all the students except for $i$ are stably matched) there should be already $\Theta(n \log n)$ applications. At this point, all but one school has their capacity filled; Without loss of generality, let it be school $s_n$. Using the terminology from \cite{ashlagi2020tiered}, we say that this intermediate state of DA is \emph{smooth} if (1) at most $O(n\log n)$ applications are made until this point and (2) all but at most $n^a$ schools have received applications from $\Omega(\log n)$ distinct students with constant $a<1$. With nearly identical analysis as in \citet{ashlagi2020tiered}, we can show that a smooth state is achieved with high probability; the details are repetitive and hence omitted.\footnote{The first bound on the number of applications can be derived from classic results on the generalized coupon collector problem; see, e.g., \cite{newman1960double,erdHos1961classical}. The second bound on the number of applications received by the schools is slightly more involved. To derive it, we run the amnesiac DA \citep{knuth1976}, and using concentration arguments on the number of draws in (generalized) coupon collector problem, one can show that the number of applications, counting duplicates, is of order $\Omega(n\log n)$ up to this stage; further, duplicates are rare among them, so the total number of applications in the classic DA is again $\Omega(n\log n)$ with high probability prior to the second stage. The number of applications each school receives can then be bounded using a Poissonization argument or with direct Chernoff-type bounds. See, e.g., \citet[Proposition~4.3]{ashlagi2020tiered} for a formal statement and analysis in the one-to-one case.}
			We now
			let $i$ start making applications until the full DA terminates (i.e., someone applies to this last unfilled school $s_n$). Note that, by symmetry, an application to a school that has received $m\ge q$ applications will be accepted with probability $q/(m+1)$. Denote the number of applications to school $s_k$  at the beginning of this second stage by $M_k$ for $k=1,\ldots,n-1$. Under a smooth state, we may assume without loss of generality that $M_k \ge \Theta(\log n)$ for $k=1,\ldots,\lfloor n-n^a\rfloor$; as the second stage of DA unfolds, the number of applications to each school can only increase.
		The probability that an application from $i$ is accepted is at most 
		\begin{equation*}
			\frac{1}{n}\Big(1 + \sum_{k=1}^{n-1} \frac{q}{M_k + 1}\Big)
			\le 
			\frac{1+n^a}{n}
			+
			\frac{n-n^a}{n}\cdot\frac{q}{\Theta(\log n)}
			=
			O(q/\log n).
		\end{equation*}
		Hence, for all sufficiently large $n$, each application by $i$ is accepted with probability at most $Cq/\log n$ for some constant $C>0$. Conditional on the likely smooth state before $i$ starts to apply, the number $T_i$ of applications that $i$ makes until receiving a tentative acceptance is therefore stochastically lower bounded by a geometric random variable with success probability $Cq/\log n$. Thus,
		\[
		\Pr(T_i<\sqrt{\log n}\mid \text{smooth})\le 1-(1-Cq/\log n)^{\sqrt{\log n}}\to0.
		\]
		Since a smooth state is achieved with high probability, the marginal probability of $i$ making fewer than $\sqrt{\log n}$ applications vanishes as $n\to\infty$, finishing our proof.
		\end{proof}

		\paragraph{Step 2: Partition Argument.} 
		Suppose $G^{\da(P_n)}$ has two disjoint subsets $I_0, I_1 \subseteq I$, each of size $\geq \epsilon nq$, with no edges from $I_0$ to $I_1$. Let $S_1 = \{s \in S: \exists j \in I_1 \text{ with } \da_j(P_n)=s\}$, so $|S_1| \geq \epsilon n$. We now show that for any fixed subsets $I_0$ and $S_1$, it is extremely unlikely that no one in $I_0$ ever applies to schools in $S_1$.
		
		\begin{lemma}\label{lem:no-crossapplication-many-to-one}
			For any fixed $I_0 \subseteq I$ and $S_1 \subseteq S$ with $|I_0| \geq \epsilon nq$ and $|S_1| \geq \epsilon n$, and for any fixed constant $c>0$, the probability that students in $I_0$ make at least $c n\sqrt{\log n}$ applications yet none apply to $S_1$ is at most $(1-\epsilon)^{c n\sqrt{\log n}}$.
		\end{lemma}
		
		\begin{proof}
			We use the same idea as in the proof of Lemma~\ref{lem:no-crossapplication}. In the amnesiac DA model, each application independently targets a school in $S_1$ with probability at least $\epsilon$. Thus, the probability that $c n\sqrt{\log n}$ applications all avoid $S_1$ is at most $(1-\epsilon)^{c n\sqrt{\log n}}$, as claimed.
		\end{proof}
		
		\paragraph{Step 3: Union Bound over Partitions.} 
		Fix $\eta>0$. The event that the largest SCC contains strictly less than $(1-2\eta)nq$ students implies the existence of linear-sized $I_0,I_1\subseteq I$ with $|I_0|,|I_1|\ge \eta nq$ and with no edges going from $I_0$ to $I_1$. Thus, no student in $I_0$ ever applies to the set of schools
		\[
		S_1:=\{s\in S:\exists j\in I_1,\ \da_j(P_n)=s\}.
		\]
		Since each school has capacity $q$, we have $|S_1|\ge \eta n$. Denote the no-application event by $\mathcal{E}_{I_0,S_1}$.
		
		Set $c_{\eta,q}:=\eta q/2$. Let $\mathcal{A}_{I_0}$ denote the event that the subset of students $I_0$ makes at least $c_{\eta,q}n\sqrt{\log n}$ applications in total. By a union bound,
		\begin{equation}
			p \leq 
			\sum_{|I_0|\geq \eta nq,\ |S_1|\ge \eta n}
			\Pr(\mathcal{E}_{I_0,S_1} \cap \mathcal{A}_{I_0})
			+
			\Pr\left(\bigcup_{|I_0|\ge \eta nq} \mathcal{A}_{I_0}^c\right).
		\end{equation}
		
		By Lemma~\ref{lem:no-crossapplication-many-to-one}, the first summation term is at most
		\begin{equation}
			2^{nq+n} \cdot (1-\eta)^{c_{\eta,q}n\sqrt{\log n}}
			=
			\exp\left((q+1)n\ln 2+c_{\eta,q}n\sqrt{\log n}\log(1-\eta)\right)\to0.
		\end{equation}
		
		Applying Lemma~\ref{lem:many-applications-many-to-one} with $\eta/2$, with high probability all but at most $\eta nq/2$ students make at least $\sqrt{\log n}$ applications. Hence every subset $I_0$ with $|I_0|\ge \eta nq$ contains at least $\eta nq/2$ students who each make at least $\sqrt{\log n}$ applications, so $\mathcal{A}_{I_0}$ holds for every such $I_0$. Therefore,
		\[
		\Pr\left(\bigcup_{|I_0|\ge \eta nq} \mathcal{A}_{I_0}^c\right)\to0.
		\]
		
		Thus, with high probability, the largest SCC contains at least $(1-2\eta)nq$ students. Since $\eta>0$ is arbitrary, the theorem follows.
	\end{proof}
	
	Despite the different structure of the many-to-one market with uniform quotas, the fundamental conclusion remains: in large random markets, a unique giant SCC emerges, and thus most students are improvable.
	However, we emphasize that the rate of convergence in the many-to-one setting differs from the one-to-one case, and depends on the size of the constant $q$: As $q$ increases, each school forms larger ``envy-free clusters'' (students assigned to the same school), potentially slowing convergence.
	
	In Figures \ref{fig:q5} and \ref{fig:q10}, we observe exactly that: the unique giant SCC emerges but the fraction of nodes in it for fixed $n$ becomes smaller for larger constant $q$.
	
	Note that although the convergence appears slow, scaling like $1/\log n$ and consistent with our analysis, the main driver behind this slow rate is the highly dispersed distribution of applications submitted by students. In particular, an $O(1/\log n)$ fraction of students submit only one application and thus envy no one. Empirical simulations suggest that, aside from these trivially unimprovable students, the giant strongly connected component (SCC) contains nearly all of the remaining students.
	
	\begin{figure}[htbp]
		\centering
		\begin{subfigure}{0.48\textwidth}
			\centering
			\includegraphics[width=\textwidth]{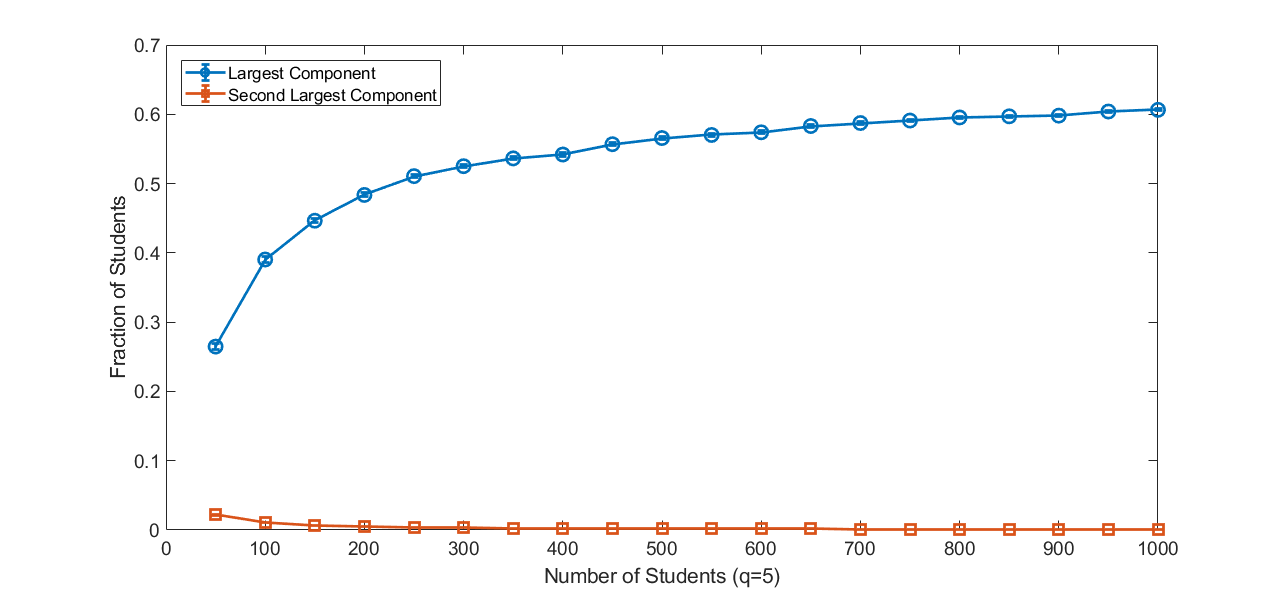}
			\caption{$n=[10:10:200], q=5$.}
			\label{fig:q5}
		\end{subfigure}
		\hfill
		\begin{subfigure}{0.48\textwidth}
			\centering
			\includegraphics[width=\textwidth]{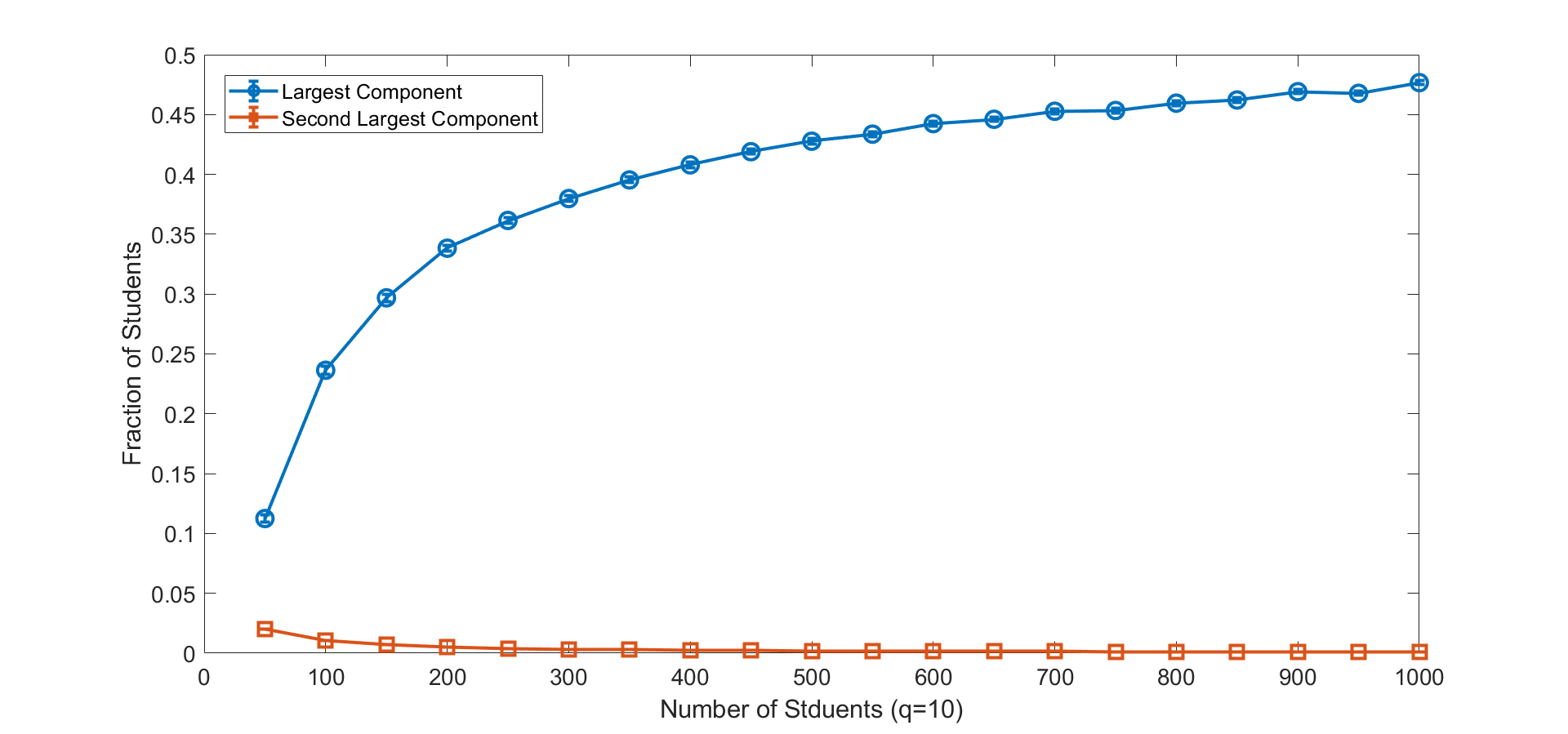}
			\caption{$n=[5:5:100], q=10$.}
			\label{fig:q10}
		\end{subfigure}
		\caption{Average fraction of nodes in the largest (blue) and second-largest (red) SCCs (average over 1,000 random problems for each $n$), preferences and priorities uniform i.i.d.}
		\label{fig:comparison}
	\end{figure}

\paragraph{Equivalence in Many-to-One Matching.}

Theorem \ref{thm:cyclepacking} also extends naturally to the many-to-one setting with fixed quota $q$ for each school. We present this generalization as follows:

\begin{theorem}
	\label{thm:cyclepacking-many-to-one}
	Consider a random school choice problem $P_n$ with $n$ schools, where each school has the same fixed quota $q$, and $nq$ students. Fix an arbitrary $\epsilon > 0$. With high probability, every cycle packing of the envy digraph $G^{\da(P_n)}$ covers at least $(1-\epsilon)nq$ students.
\end{theorem}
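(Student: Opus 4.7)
The plan is to adapt the proof of Theorem \ref{thm:cyclepacking} to the many-to-one setting by reusing the partition/union-bound machinery already developed for Theorem \ref{thm:many-to-one-scc}. Specifically, I would argue by contradiction: suppose there is a cycle packing $H \subseteq G^{\da(P_n)}$ with $|V(H)| < (1-\epsilon)nq$. Since $H$ is a \emph{maximal} disjoint union of cycles, the induced subgraph $G^{\da(P_n)}\setminus H = G^{\da(P_n)}[I\setminus V(H)]$ must be acyclic, and it contains $N := |I\setminus V(H)| > \epsilon nq$ students.

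Next, I would take a topological ordering $i_1,\ldots,i_N$ of $I\setminus V(H)$ and partition these vertices into halves $I_0 = \{i_1,\ldots,i_{\lfloor N/2\rfloor}\}$ and $I_1 = \{i_{\lfloor N/2\rfloor+1},\ldots,i_N\}$. By the topological ordering, no edge of $G^{\da(P_n)}\setminus H$ goes from $I_0$ to $I_1$; moreover, since $H$ is supported on $V(H)$ which is disjoint from $I\setminus V(H)$, the same absence of edges holds in the full envy digraph $G^{\da(P_n)}$. Setting $S_1 := \{\da_j(P_n) : j\in I_1\}$, this absence of cross-edges translates into no student in $I_0$ ever applying in DA to any school in $S_1$.

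The key step is then to invoke the union-bound argument from Step 3 of Theorem \ref{thm:many-to-one-scc}, with its two ingredients Lemma \ref{lem:many-applications-many-to-one} and Lemma \ref{lem:no-crossapplication-many-to-one}. To do so I need to verify that both $|I_0|$ and $|S_1|$ are linear in their respective totals. We immediately have $|I_0| \ge \lfloor\epsilon nq/2\rfloor \ge (\epsilon/2)\,nq$. For the schools, since each school accommodates at most $q$ students, $|S_1| \ge |I_1|/q \ge (\epsilon/2)\,n$. Thus, replacing $\epsilon$ by $\epsilon/2$ in the cited lemmas, the probability that such a partition exists is bounded by a $(2^{nq})^2\cdot(1-\epsilon/2)^{n\sqrt{\log n}}$ term (negligible since $\sqrt{\log n}\gg q$) plus the vanishing application-shortfall term from Lemma \ref{lem:many-applications-many-to-one}. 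Both tend to zero as $n\to\infty$, delivering the contradiction and hence $|V(H)|\ge (1-\epsilon)nq$ with high probability (after redefining $\epsilon$ appropriately, exploiting that it was arbitrary).

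The main obstacle I anticipate is purely bookkeeping: one must track how the many-to-one rescaling $|I|=nq$ versus $|S|=n$ propagates through the cardinality thresholds of the two lemmas. Unlike the one-to-one case where $I_0$ and $S_1$ both automatically have size $\ge\epsilon n/2$, here the $S_1$ side is compressed by a factor of $q$, so one must ensure this does not degrade the union-bound exponent. Because $q$ is a fixed constant, the exponent $n\sqrt{\log n}$ from Lemma \ref{lem:no-crossapplication-many-to-one} dominates the $2nq\log 2$ combinatorial factor; so, while the convergence rate worsens with $q$ (as observed in Figures \ref{fig:q5} and \ref{fig:q10}), the asymptotic conclusion survives without modification.
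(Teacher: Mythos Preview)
Your proposal is correct and follows essentially the same approach as the paper's own proof: both argue by contradiction, topologically order the acyclic remainder $I\setminus V(H)$, split it in half, and invoke the no-cross-application machinery from the many-to-one SCC theorem. Your version is in fact slightly more careful than the paper's, since you explicitly verify the lower bound $|S_1|\ge |I_1|/q \ge (\epsilon/2)n$ needed to apply Lemma~\ref{lem:no-crossapplication-many-to-one}, whereas the paper leaves this implicit.
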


\begin{proof}
	The proof follows the approach of Theorem \ref{thm:cyclepacking}, adapted to the many-to-one setting. Let $H \subseteq G^{\da(P_n)}$ be a cycle packing, and consider the induced subgraph $G^{\da(P_n)} \backslash H = G^{\da(P_n)}[I \backslash V(H)]$, which contains no cycles by definition.
	
	Suppose, for contradiction, that $|V(H)| < (1-\epsilon)nq$. Then, the directed acyclic subgraph $G^{\da(P_n)} \backslash H$ contains $N = |I \backslash V(H)| > \epsilon nq$ vertices. We can find a topological ordering $i_1, \dots, i_N$ of vertices in $I \backslash V(H)$ such that no directed edge from $i_k$ to $i_{\ell}$ for $1 \leq k < \ell \leq N$ is present in $G^{\da(P_n)} \backslash H$.
	
	Partitioning these nodes into $I_0 = \{i_1,\dots,i_{\lfloor N/2 \rfloor}\}$ and $I_1 = \{i_{\lfloor N/2 \rfloor + 1}, \dots, i_N\}$ yields two subsets, each with size at least $\lfloor\epsilon nq/2\rfloor$, with no edges going from $I_0$ to $I_1$ in $G^{\da(P_n)}$.
	
	This scenario implies that no student in $I_0$ has applied to any school matched to students in $I_1$ during the execution of DA. Let
	\[
	S_1:=\{s\in S:\exists j\in I_1,\ \da_j(P_n)=s\}.
	\]
	Since each school has capacity $q$, for all sufficiently large $n$ we have $|S_1|\ge \epsilon n/3$. By the union-bound argument in the proof of Theorem~\ref{thm:many-to-one-scc}, applied with $\eta=\epsilon/3$, the probability that such subsets exist tends to zero.
		
	Thus, we must have $|V(H)| \geq (1-\epsilon)nq$ with high probability.
\end{proof}
These findings collectively demonstrate that our main conclusions---both about the pervasiveness of improvable students and the asymptotic equivalence of mechanisms that Pareto dominate DA---are robust to varying market structures, including the practically important case of schools with multiple seats.


\subsection{Correlation in Preferences}
\label{app:appendixtwo}
In the main text, we assume that students' preferences are independently drawn. This assumption may not hold in some scenarios where students share common views about school quality while retaining individual tastes. Here we relax the independence assumption and establish that our main results are robust to realistic correlation structures. We first provide two theoretical extensions under two correlation models. Then, we provide extensive simulation analysis to show that our results extend even beyond these correlation structures.

\subsubsection{Model 1: Tiered Preferences}

We first consider a tiered preference model where schools are divided into tiers, and every school in tier $k$ is preferred to any school in tier $k+1$. This captures the empirical observation that some schools are widely recognized as higher quality. Within tiers, preferences are drawn uniformly at random.\footnote{Tiered preference and priority structures in matching markets have been discussed in \cite{allman2025signaling,ashlagi2017,ashlagi2020tiered,ccelebi2022priority,calsamiglia2023catchment, cai2025manipulability, dur2016explicit}.}

\begin{theorem}
	\label{thm:tiered}
	Consider a market with tiers of schools of equal size, where all schools in tier $k$ are preferred to all schools in tier $k+1$ by all students. Within each tier, preferences are drawn uniformly at random, and priorities are drawn uniformly at random across all schools. Then as $n \to \infty$:
	\begin{enumerate}
	\item There exist exactly $c$ giant strongly connected components in the DA envy digraph, one per tier, and no strongly connected component contains students assigned to schools in different tiers.
		\item Within each tier $k$, the fraction of tier-$k$ students in that tier's giant SCC converges to 1 in probability.
	\end{enumerate}
\end{theorem}

\begin{proof}
	The tiered structure induces complete segregation: a student matched to tier $k$ can only envy students matched to tiers $1, \ldots, k$, and can only be envied by students matched to tiers $k, \ldots, c$. Therefore, cycles cannot span across tiers, implying that strongly connected components must lie entirely within single tiers (see Theorem 2 in \citealt{ortega2025pareto}).
	
Formally, let $I_k$ denote the set of students matched to tier-$k$ schools under DA, and let $\mathcal{G}_k$ denote the induced subgraph of the envy digraph restricted to students in $I_k$.
Within each tier $k$, conditional on the set $I_k$, the random market structure is identical to the i.i.d. model analyzed in Theorem~\ref{thm:scc}. Each student in $I_k$ has preferences drawn uniformly over tier-$k$ schools, independently of her preferences over other tiers, and priorities are uniformly random. Therefore, the analysis from Theorem~\ref{thm:scc} applies directly to $\mathcal{G}_k$, establishing that a giant SCC emerges within tier $k$ containing a fraction converging to 1 of students in $I_k$.
\end{proof}

For the two-tier case, simulations confirm that two giant SCCs emerge, together containing essentially all students as the market grows (Figure \ref{fig:scc2}).

\begin{figure}[h!]
	\centering
	\includegraphics[width=\textwidth]{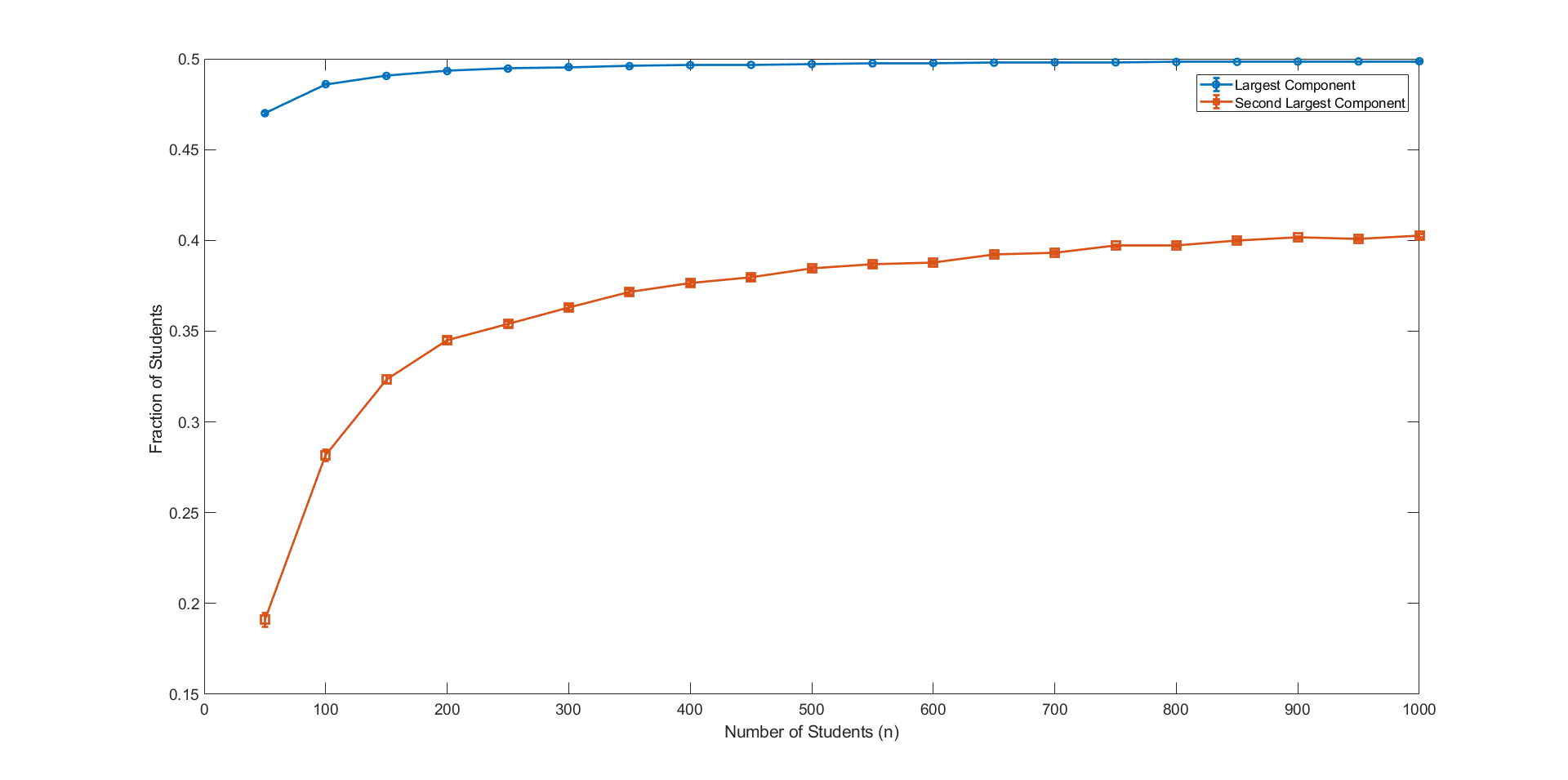} 
	\caption{Average fraction of nodes in the largest (blue) and second-largest (red) SCCs in a two-tiered model (average over 1,000 random envy digraphs for each $n$).}
	\label{fig:scc2}
\end{figure}

\subsubsection{Model 2: General Correlated Preferences}

We now state sufficient conditions under which the giant-SCC result extends beyond the i.i.d.\ uniform benchmark. Throughout this subsection, priorities remain independently and uniformly drawn; only student preferences are allowed to be correlated. The proof of Theorem~\ref{thm:scc} relies on two ingredients. First, large sets of students must not collectively rank large sets of schools too low. Second, enough students must make enough proposals for these preference comparisons to become realized applications. We state these two requirements as uniform high-probability conditions over all linear-sized sets.

\begin{assumption}[Uniform Top-List Expansion]
	\label{ass:entropy}
	For every constant $\delta\in(0,1)$, there exist an integer $L_\delta\ge 1$ and a constant $\rho_\delta>0$ such that, with high probability, the following holds simultaneously for every pair of sets $I_0\subseteq I$ and $S'\subseteq S$ with $|I_0|\ge \delta n$ and $|S'|\ge \delta n$:
	\[
	\left|
	\left\{
	i\in I_0:
	\text{some school in }S'\text{ is among student }i\text{'s top }L_\delta\text{ schools}
	\right\}
	\right|
	\ge 2\rho_\delta |I_0|.
	\]
\end{assumption}

\begin{assumption}[Sufficient Competition]
	\label{ass:proposals}
	For every constant $\eta>0$ and every fixed integer $L\ge 1$, with high probability all but at most $\eta n$ students make at least $L$ proposals during the execution of DA.
\end{assumption}

Assumption~\ref{ass:entropy} is the correlated-preference analogue of the expansion estimate used in the i.i.d.\ proof. It is deliberately stated uniformly over all large sets of students and schools, because the proof must rule out exponentially many possible cuts in the envy graph. Assumption~\ref{ass:proposals} is the analogue of Lemma~\ref{lem:many-applications}: it ensures that top-list preference expansion becomes realized as actual applications during DA. Together, the two assumptions exclude the degenerate case of nearly perfect correlation, in which a large set of schools may be ranked very low by almost all students.

The next lemma combines the two assumptions into the exact application-expansion property needed in the SCC proof.

\begin{lemma}
	\label{lem:uniform-application-expansion}
	Under Assumptions~\ref{ass:entropy} and~\ref{ass:proposals}, for every constant $\delta\in(0,1)$ there exists a constant $\rho_\delta>0$ such that, with high probability, the following holds simultaneously for every pair of sets $I_0\subseteq I$ and $S'\subseteq S$ with $|I_0|\ge \delta n$ and $|S'|\ge \delta n$:
	\[
	\left|
	\left\{
	i\in I_0:
	i\text{ applies to some school in }S'\text{ during DA}
	\right\}
	\right|
	\ge \rho_\delta |I_0|.
	\]
\end{lemma}

\begin{proof}
	Fix $\delta\in(0,1)$. Let $L_\delta$ and $\rho_\delta$ be the constants from Assumption~\ref{ass:entropy}. Apply Assumption~\ref{ass:proposals} with $L=L_\delta$ and $\eta=\rho_\delta\delta$. With high probability, at most $\rho_\delta\delta n$ students make fewer than $L_\delta$ proposals.
	
	On the event in Assumption~\ref{ass:entropy}, for every $I_0,S'$ with $|I_0|,|S'|\ge\delta n$, at least $2\rho_\delta |I_0|$ students in $I_0$ rank some school in $S'$ among their top $L_\delta$ schools. Among these students, at most $\rho_\delta\delta n\le \rho_\delta |I_0|$ make fewer than $L_\delta$ proposals. Hence at least $\rho_\delta |I_0|$ students in $I_0$ both rank some school in $S'$ among their top $L_\delta$ schools and make at least $L_\delta$ proposals. Each such student must apply to some school in $S'$ during DA. This proves the claim.
\end{proof}

\begin{theorem}
	\label{thm:correlated-giant}
	Fix an arbitrary $\epsilon>0$. Under any preference model satisfying Assumptions~\ref{ass:entropy} and~\ref{ass:proposals}, with high probability the DA envy digraph $G^{\da(P_n)}$ has a giant SCC containing at least $(1-\epsilon)n$ students.
\end{theorem}

\begin{proof}
	Fix $\epsilon>0$ and set $\gamma=\epsilon/2$. By Lemma~\ref{lem:uniform-application-expansion}, with high probability, for every pair of sets $I_0\subseteq I$ and $S'\subseteq S$ with $|I_0|,|S'|\ge \gamma n$, some student in $I_0$ applies to a school in $S'$ during DA.
	
	Suppose, toward a contradiction, that no SCC of $G^{\da(P_n)}$ contains at least $(1-\epsilon)n=(1-2\gamma)n$ students. Consider the condensation DAG of $G^{\da(P_n)}$, whose vertices are the strongly connected components of $G^{\da(P_n)}$. As in the proof of Theorem~\ref{thm:scc}, there exist disjoint subsets $I_0,I_1\subseteq I$ with $|I_0|,|I_1|\ge\gamma n$ and no directed edge from $I_0$ to $I_1$.
	
	Let
	\[
	S_1:=\{\da_i(P_n):i\in I_1\}.
	\]
	Since we are in the one-to-one case, $|S_1|=|I_1|\ge\gamma n$. By Lemma~\ref{lem:uniform-application-expansion}, some student $i\in I_0$ applies to some school $s\in S_1$ during DA. Let $j\in I_1$ be the student assigned to $s$ under DA, so that $s=\da_j(P_n)$. Since $i\in I_0$ and $j\in I_1$, student $i$ is not assigned to $s$. Having applied to $s$ before receiving her final DA assignment, student $i$ strictly prefers $s$ to $\da_i(P_n)$. Therefore there is an envy edge from $i$ to $j$, contradicting the construction of $I_0$ and $I_1$.
	
	Hence, with high probability, some SCC contains at least $(1-\epsilon)n$ students. Since $\epsilon>0$ was arbitrary, the largest SCC contains a fraction converging to one in probability.
\end{proof}

We note that the uniformity in Assumption~\ref{ass:entropy} is essential. A condition saying that, for one fixed student and one fixed school set, the student reaches that set with probability $\omega(1/n)$ is not enough: the proof must rule out $\binom{n}{\epsilon n}\binom{n}{\epsilon n}=e^{\Theta(n)}$ possible cuts. This is why the assumption is stated directly as a high-probability expansion condition over all linear-sized sets.

\paragraph{Application to Bounded Rank-Ordered Logit Preferences.}
A natural correlated-preference model satisfying our sufficient conditions is the bounded rank-ordered logit model. Each school $j$ has a public attractiveness weight $\theta_j>0$. Conditional on the vector $(\theta_j)_{j\in S}$, each student's ranking is generated as a sequence of multinomial-logit choices without replacement: at each step, the next-ranked school is selected from the remaining schools with probability proportional to its attractiveness weight.

We assume that the public attractiveness ratio is bounded: after normalization,
\[
1\le \theta_j\le C
\qquad\text{for every }j\in S,
\]
for some constant $C<\infty$ independent of $n$. Priorities remain independently and uniformly drawn.

\begin{lemma}[Top-list expansion]
	\label{lem:logit-toplist}
	For every fixed $C<\infty$, the bounded rank-ordered logit preference model satisfies Assumption~\ref{ass:entropy}.
\end{lemma}

\begin{proof}
	Fix $\delta\in(0,1)$ and condition on a score vector $(\theta_j)_{j\in S}$ satisfying $1\le\theta_j\le C$ for all $j$. Preferences are independent across students conditional on this vector.
	
	Fix a student $i$, a set $S'\subseteq S$ with $|S'|\ge\delta n$, and an integer $L\ge1$. Consider the first $L$ schools in student $i$'s rank-ordered logit ranking. At any one of the first $L$ draws, conditional on the previously drawn schools, the probability that the next school lies in the as-yet-undrawn part of $S'$ is at least
	\[
	\frac{\delta n-L}{Cn}.
	\]
	For all sufficiently large $n$, since $L$ is fixed, this is at least
	\[
	q_{\delta,C}:=\frac{\delta}{2C}>0.
	\]
	Therefore the probability that none of student $i$'s first $L$ schools lies in $S'$ is at most
	\[
	(1-q_{\delta,C})^L.
	\]
	
	Now fix sets $I_0\subseteq I$ and $S'\subseteq S$ with $|I_0|,|S'|\ge\delta n$. Let $H_i$ be the indicator that student $i$ ranks at least one school in $S'$ among her top $L$ schools. The random variables $(H_i)_{i\in I_0}$ are independent across students, and
	\[
	\Pr(H_i=1)\ge 1-(1-q_{\delta,C})^L.
	\]
	Choose $L=L_\delta$ large enough that
	\[
	\delta\cdot \mathrm{KL}\left(\frac12\,\middle\|\,1-(1-q_{\delta,C})^L\right)>2\log 2+1,
	\]
	where $\mathrm{KL}(\cdot\|\cdot)$ denotes binary relative entropy. This is possible because $1-(1-q_{\delta,C})^L\to1$ as $L\to\infty$.
	
	By the Chernoff bound, for each fixed pair $(I_0,S')$,
	\[
	\Pr\left(\sum_{i\in I_0}H_i<\frac{|I_0|}{2}\right)
	\le
	\exp\left(-(2\log 2+1)n\right).
	\]
	There are at most $2^n$ choices of $I_0$ and at most $2^n$ choices of $S'$. Hence, by a union bound,
	\[
	\Pr\left(
	\exists I_0,S'\text{ with }|I_0|,|S'|\ge\delta n
	\text{ and }
	\sum_{i\in I_0}H_i<\frac{|I_0|}{2}
	\right)
	\le e^{-n}\to0.
	\]
	Thus, with high probability, simultaneously for every such pair $(I_0,S')$, at least $|I_0|/2$ students in $I_0$ rank some school in $S'$ among their top $L_\delta$ schools. Assumption~\ref{ass:entropy} follows with $\rho_\delta=1/4$.
\end{proof}

\begin{lemma}[Proposal volume]
	\label{lem:logit-proposals}
	For every fixed $C<\infty$, the bounded rank-ordered logit model with independently uniform priorities satisfies Assumption~\ref{ass:proposals}.
\end{lemma}

\begin{proof}
	In DA, the number of proposals made by a student equals the rank of her DA assignment. Hence it suffices to show that, for every fixed integer $L\ge1$ and every $\eta>0$, with high probability at most $\eta n$ students are assigned one of their top $L-1$ schools under DA.
	
	The bounded rank-ordered logit model with independently uniform priorities is a special case of the logit-based random matching model of \citet{ashlagi2023welfare}. In their notation, students' canonical score rows are proportional to $(\theta_j)_{j\in S}$, while schools' priority score rows are uniform. Since $1\le\theta_j\le C$, the canonical score ratios are uniformly bounded; hence their contiguity condition is satisfied.
	
	Their rank-distribution theorem and tail estimates imply that, with high probability, in every stable matching, all but an arbitrarily small fraction of agents have ranks bounded below by a sequence diverging with $n$.\footnote{See \citet[Theorem~7.4 and Corollary~C.8; see also Remark~C.9]{ashlagi2023welfare}. Their estimates imply that, for every $\eta>0$, with high probability at most $\eta n$ agents obtain ranks below a constant multiple of $(\log n)^{7/8}$ in any stable matching, up to the harmless side-specific fitness scaling. In the bounded public-score model considered here, these fitness values are within constant factors of $n$.}
	
	Since the DA outcome is stable, the same lower-rank bound applies to DA. For any fixed $L$, the diverging lower bound eventually exceeds $L$. Thus, with high probability, all but at most $\eta n$ students make at least $L$ proposals during DA. This is exactly Assumption~\ref{ass:proposals}.
\end{proof}

\begin{corollary}
	\label{cor:logit-giant}
	For every fixed $C<\infty$, the bounded rank-ordered logit model with independently uniform priorities satisfies the conclusion of Theorem~\ref{thm:correlated-giant}: for every fixed $\epsilon>0$, with high probability the DA envy digraph has a giant SCC containing at least $(1-\epsilon)n$ students.
\end{corollary}

\begin{proof}
	Immediate from Theorem~\ref{thm:correlated-giant}, Lemma~\ref{lem:logit-toplist}, and Lemma~\ref{lem:logit-proposals}.
\end{proof}

The bounded-ratio requirement is important. If the attractiveness ratio grows with $n$, a linear-sized set of low-attractiveness schools may be ranked below the first fixed number of positions by almost every student, and Assumption~\ref{ass:entropy} can fail. The additive-utility simulations below explore precisely this boundary: as the common component becomes dominant, the uniform expansion condition need not hold, even though the giant-SCC phenomenon may still appear over a wide range of parameters.

\subsubsection{Additive Utility Model: Evidence Beyond Sufficient Conditions}

Theorem~\ref{thm:correlated-giant} gives sufficient conditions under which the giant-SCC result extends to correlated preferences. However, these conditions are not designed to directly cover the additive utility model used by \cite{che2019efficiency}.
They parameterize correlation strength through a single variable $\alpha \in [0,1]$:
\begin{equation}
	u_{ij} = \alpha \cdot \theta_j + (1-\alpha) \cdot \epsilon_{ij}
\end{equation}
where:
\begin{itemize}
	\item $\theta_j \sim \text{Uniform}[0,1]$ is the common component representing school $j$'s objective quality/social desirability.
	\item $\epsilon_{ij} \sim \text{Uniform}[0,1]$ is the idiosyncratic component representing student $i$'s personal taste for school $j$.
	\item $\alpha \in [0,1]$ is the correlation parameter governing the strength of the common component.
	\item $\{\theta_j\}_{j=1}^n$ and $\{\epsilon_{ij}\}_{i=1,j=1}^{n,n}$ are mutually independent.
\end{itemize}

Student preferences are determined by ranking schools in descending order of utilities $u_{ij}$. When $\alpha = 0$, we recover the i.i.d. model; when $\alpha = 1$, preferences are perfectly correlated. Priorities remain uniformly random (as they are driven by idiosyncratic factors like distance and sibling attendance).

For high values of $\alpha$ close to 1, the uniform top-list expansion condition in Assumption~\ref{ass:entropy} is unlikely to hold. When the common component dominates, subsets of universally ``bad'' schools will be ranked low by nearly all students. Lacking a complete theoretical answer for this type of correlation structure, we provide an extensive simulation analysis documenting the emergence of a giant SCC for a large parameter family.

\paragraph{Simulation Design.} We simulate markets with $n \in \{100, 200, \ldots, 1000\}$ students and schools, generating 1,000 random instances for each market size and each correlation parameter $\alpha \in \{0.25, 0.5, 0.75\}$.

\paragraph{Findings.} Figures \ref{fig:alpha25}, \ref{fig:alpha5}, and \ref{fig:alpha75} show that:

\begin{enumerate}
	\item A unique giant SCC emerges for all moderate correlation values tested.
	\item The giant SCC contains an \emph{even larger} fraction of students than in the i.i.d. case ($\alpha = 0$), particularly for $\alpha = 0.5$ and $\alpha = 0.75$.
	\item Only for extreme values $\alpha > 0.9$ does the giant SCC become negligible, as preferences approach perfect correlation.
\end{enumerate}

\begin{figure}[h!]
	\centering
	\begin{subfigure}{0.32\textwidth}
		\includegraphics[width=\textwidth]{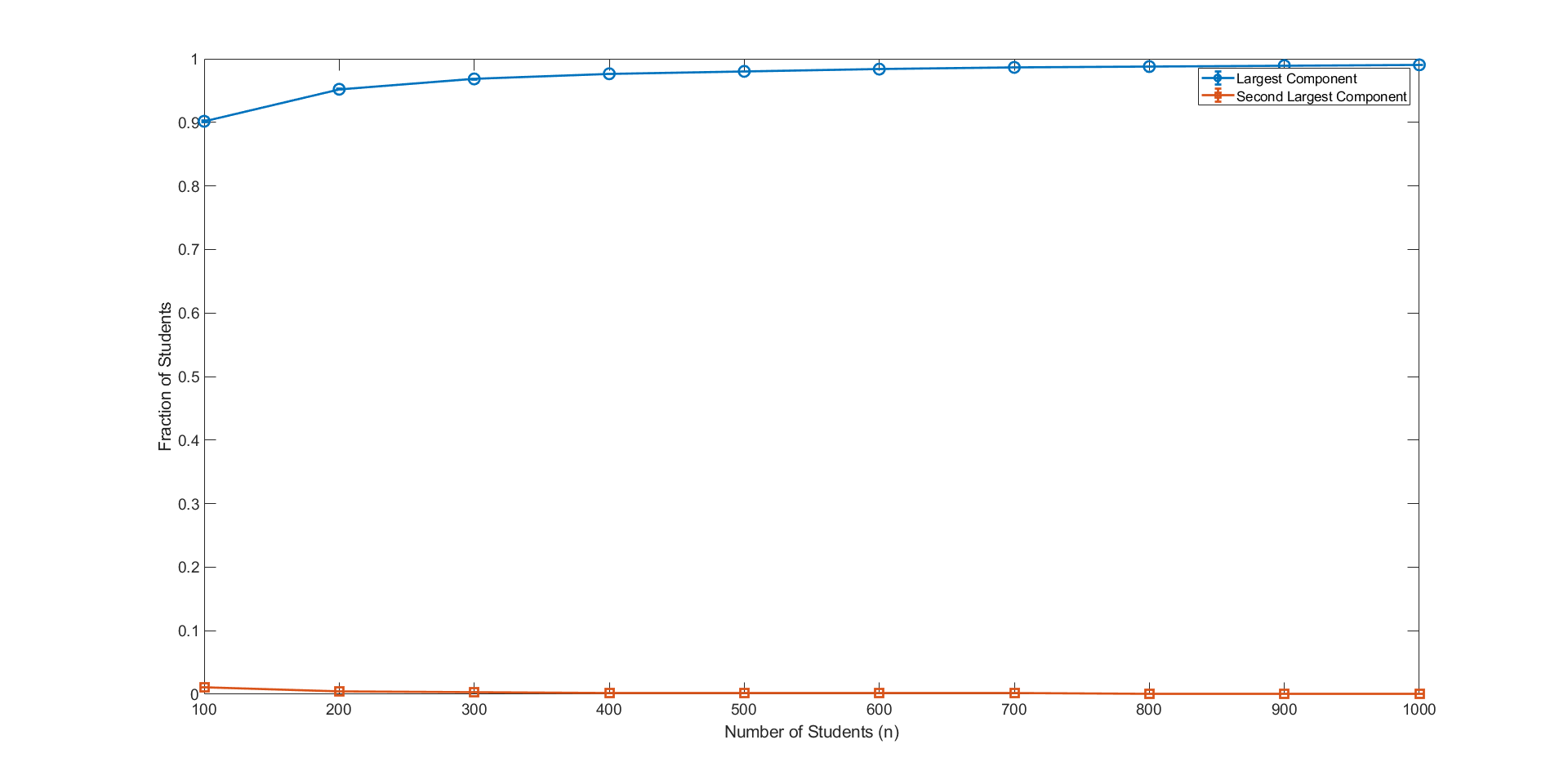}
		\caption{$\alpha=0.25$}
		\label{fig:alpha25}
	\end{subfigure}
	\hfill
	\begin{subfigure}{0.32\textwidth}
		\includegraphics[width=\textwidth]{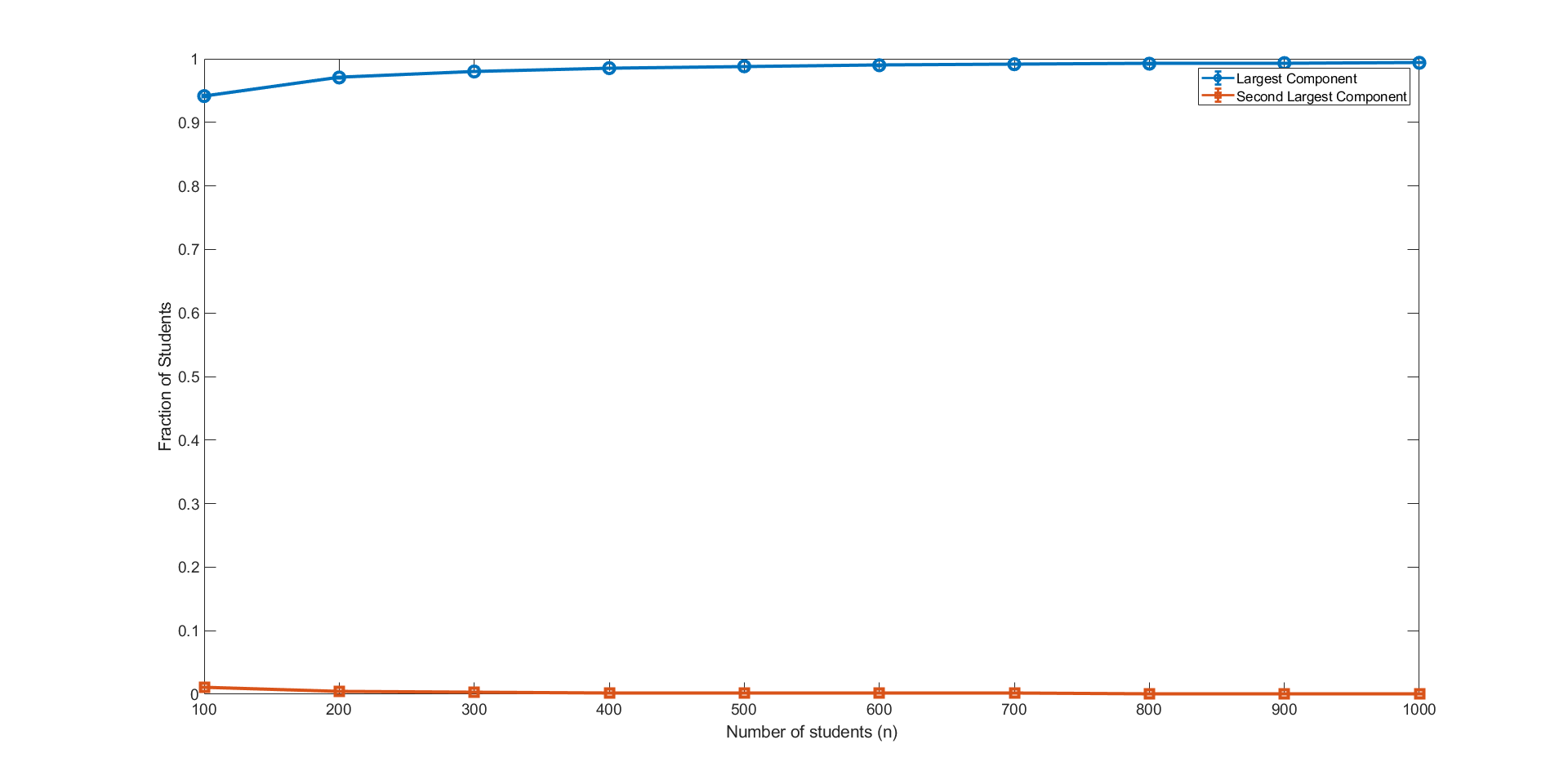}
		\caption{$\alpha=0.5$}
		\label{fig:alpha5}
	\end{subfigure}
	\hfill
	\begin{subfigure}{0.32\textwidth}
		\includegraphics[width=\textwidth]{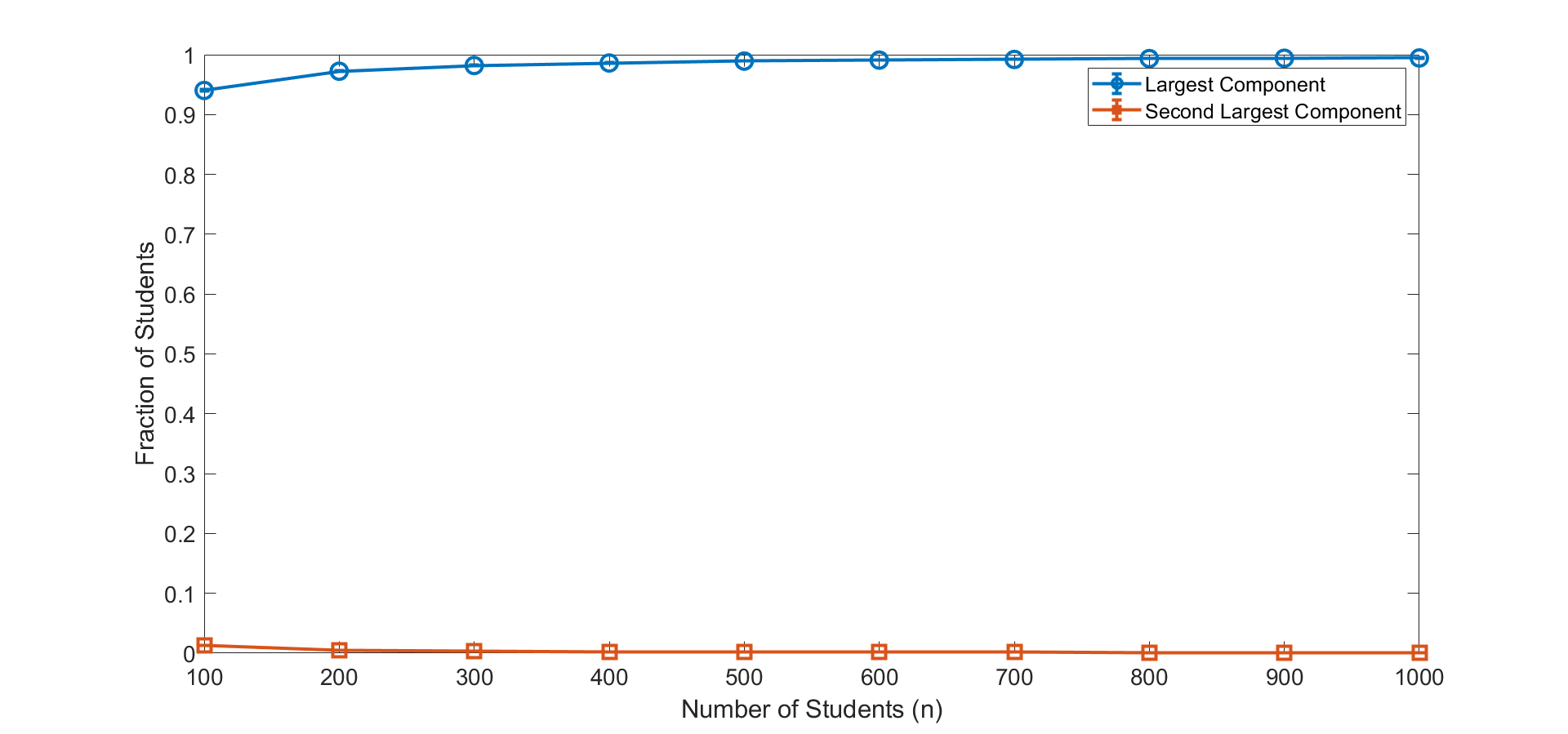}
		\caption{$\alpha=0.75$}
		\label{fig:alpha75}
	\end{subfigure}
	\caption{Fraction of nodes in largest (blue) and second-largest (red) SCCs under additive utility model for varying correlation parameters (1,000 replications per market size).}
	\label{fig:additive-all}
\end{figure}
\paragraph{Interpretation.} We conjecture that moderate correlation produces two competing effects: (1) it reduces the fraction of students obtaining their first choice (who are unimprovable by definition), while (2) it creates denser envy relationships among remaining students. For moderate $\alpha$, effect (2) dominates, actually increasing improvability relative to the i.i.d. baseline.

These simulations suggest the giant-SCC phenomenon can persist outside the sufficient conditions in Assumptions~\ref{ass:entropy} and~\ref{ass:proposals}. This indicates that the uniform top-list expansion and proposal-volume conditions are sufficient but not necessary for the emergence of the giant SCC.

In summary, the independence assumption in our main analysis does not represent a best-case scenario for DA. Rather, it may understate the prevalence of improvable students in empirically relevant settings with moderate correlation. The sufficient conditions we provide are conservative: they formalize when we can prove the result, but the phenomenon appears to hold more broadly.

\subsection{\new{Correlation in Priorities}}
\label{app:appendixthree}

\new{We now introduce correlation in schools' priorities to analyse whether the unique giant SCC remains. We introduce correlation using the additive model we used previously for preferences, where correlation enters via the parameter $\alpha \in [0,1]$ (we skip the tiered model in which students in higher tiers have universally higher priority at all schools, as this analysis yields as many giant SCCs as tiers, by an argument identical to the one used in Theorem 	\ref{thm:tiered} ).}

\new{We find, using simulations, that our results are qualitatively robust when correlation is mild or moderate: there remains a unique giant SCC, albeit smaller. When correlation becomes large $\alpha=0.75$, the size of the largest SCC collapses significantly for values of $n$ between 100 and 1,000, now only covering around 18\% of students (but still growing with $n$, Figure \ref{fig:additive-allpriorities}).}

\new{It is interesting that high (but not perfect) correlation in preferences implies a unique giant SCC larger than the iid baseline, yet high correlation in priorities has an opposite effect. An intuitive explanation for why this is the case is that correlation in preferences makes DA's envy digraph much denser, leading to significantly more cycles (alternatively, using Kesten's idea, it makes every student an interrupter that causes inefficiency with high probability). On the other hand, correlation in priorities does not increase the number of edges but changes its concentration: students with better priorities do not point to those with worse priorities, limiting the potential for cycles. }

\begin{figure}[h!]
	\centering
	\begin{subfigure}{0.32\textwidth}
		\includegraphics[width=\textwidth]{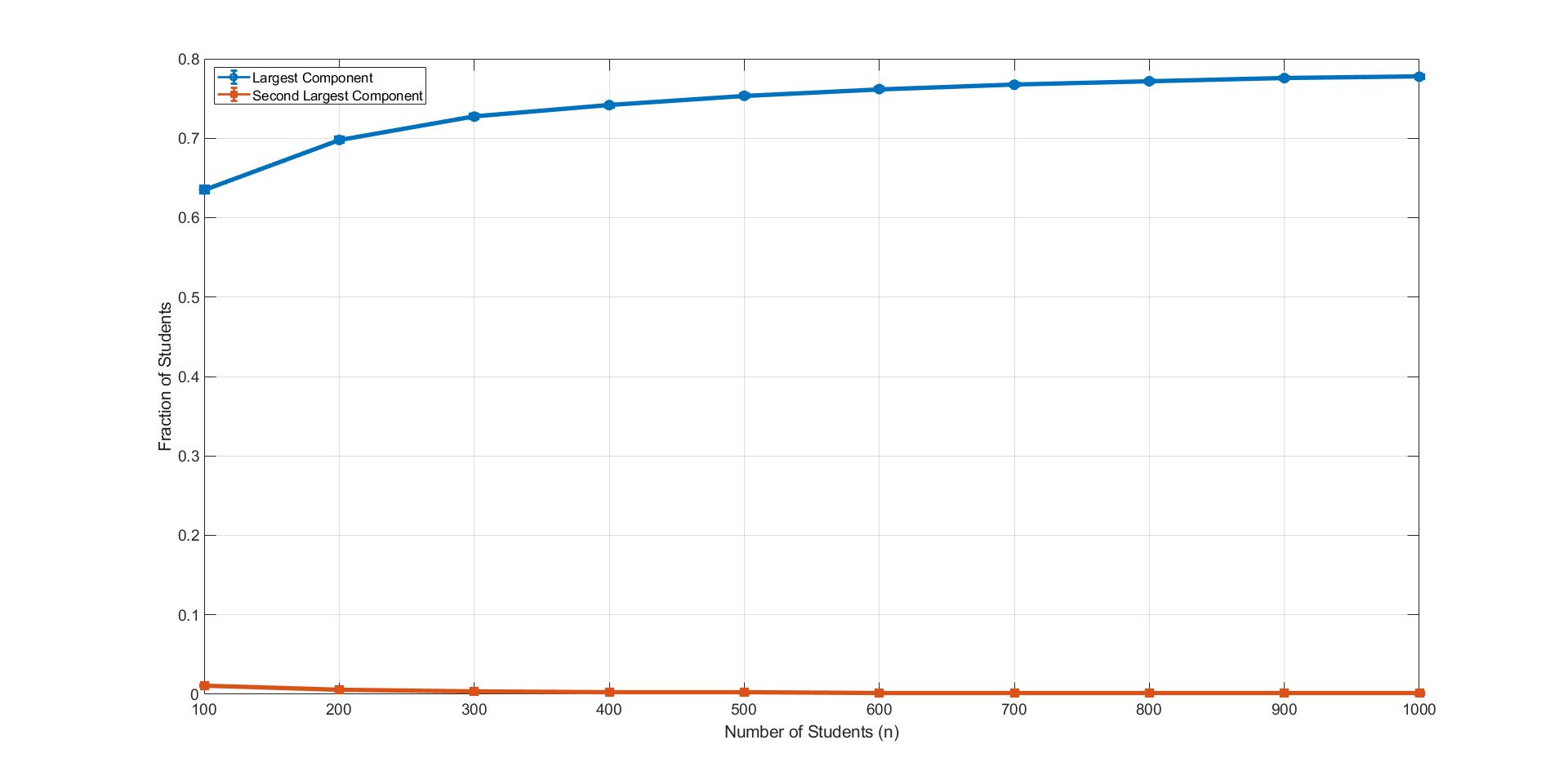}
		\caption{$\alpha=0.25$}
		\label{fig:alpha25priorities}
	\end{subfigure}
	\hfill
	\begin{subfigure}{0.32\textwidth}
		\includegraphics[width=\textwidth]{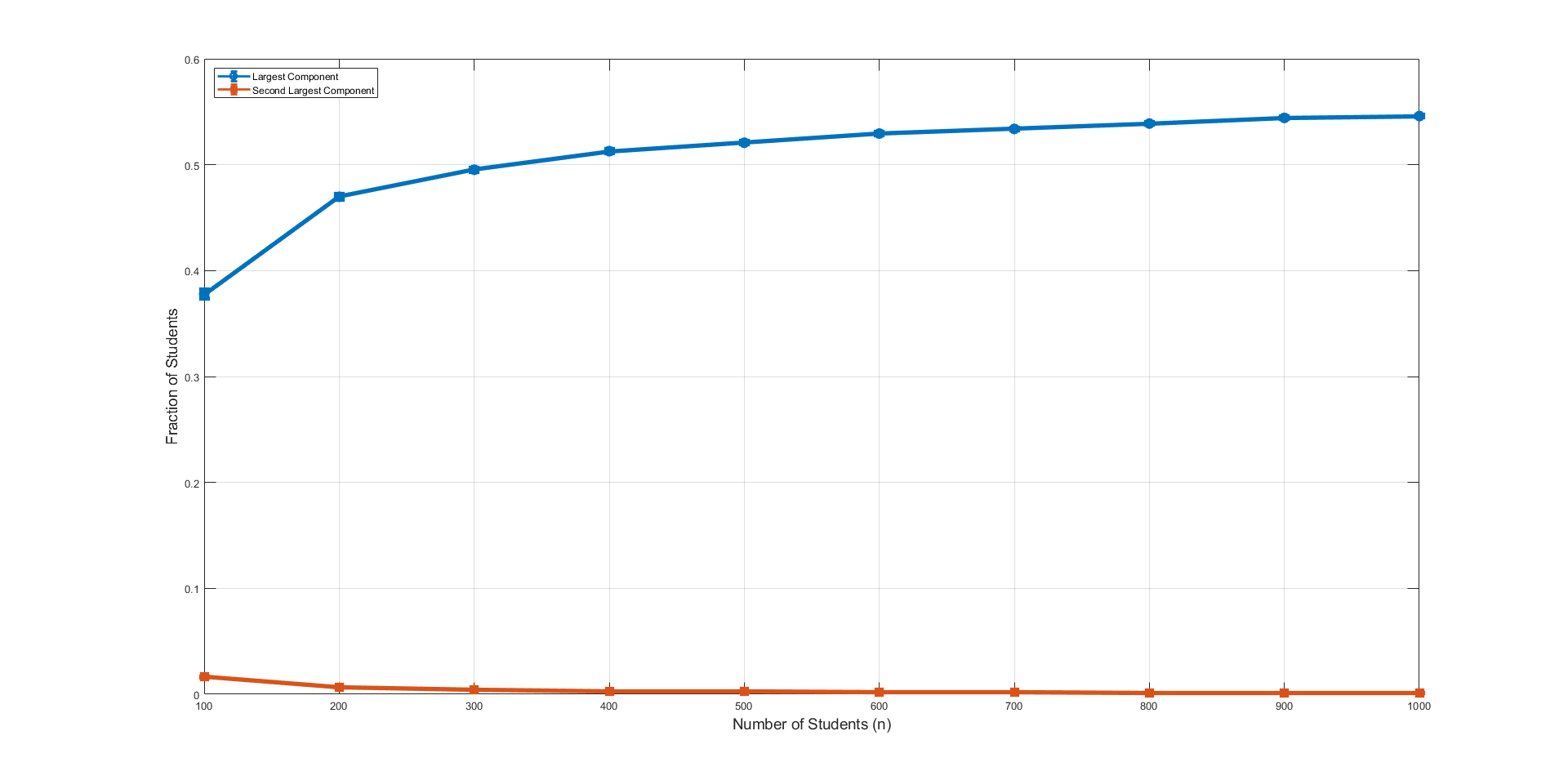}
		\caption{$\alpha=0.5$}
		\label{fig:alpha5priorities}
	\end{subfigure}
	\hfill
	\begin{subfigure}{0.32\textwidth}
		\includegraphics[width=\textwidth]{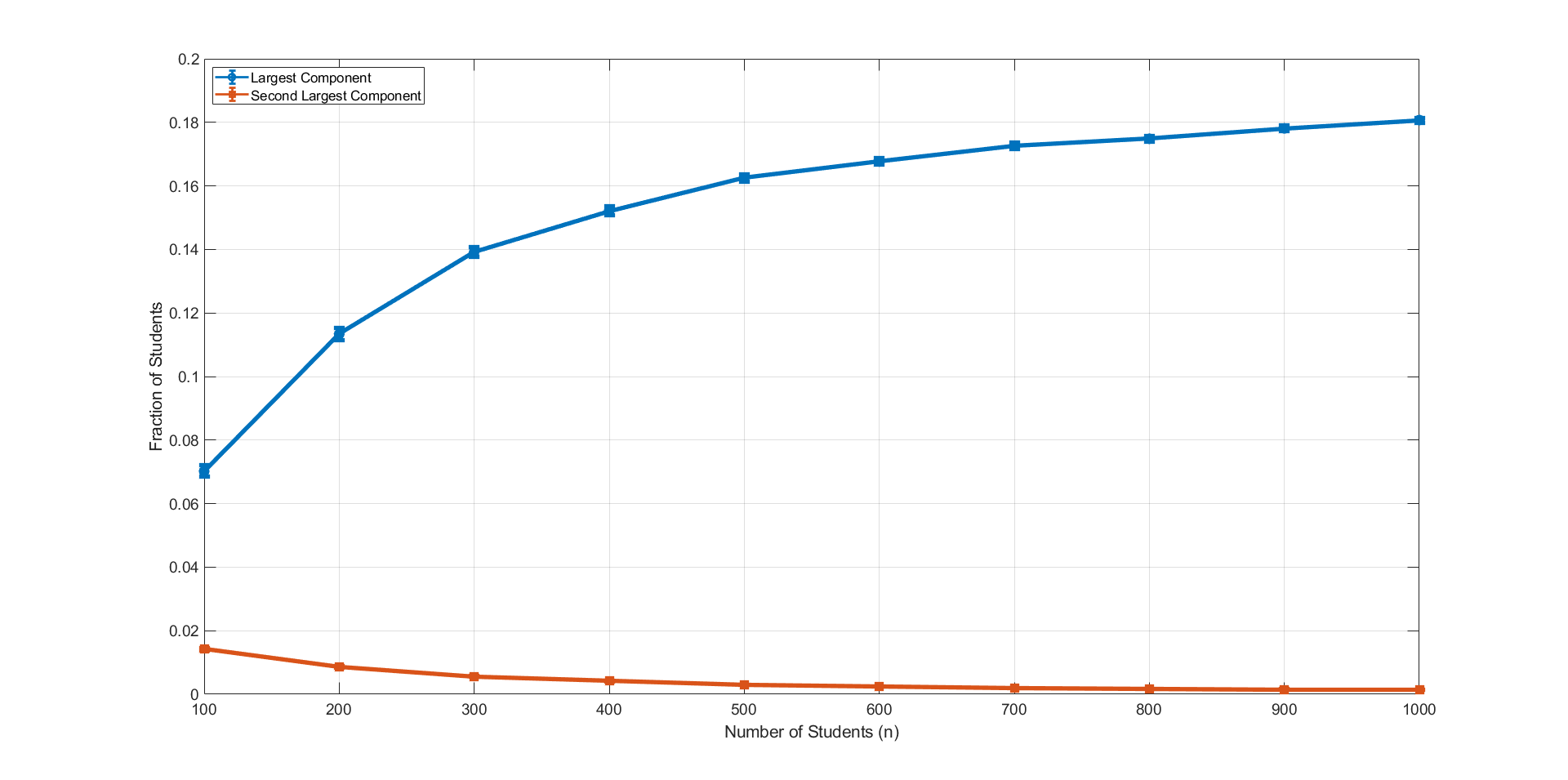}
		\caption{$\alpha=0.75$}
		\label{fig:alpha75priorities}
	\end{subfigure}
	\caption{\new{Fraction of nodes in largest (blue) and second-largest (red) SCCs under additive utility model for varying \emph{priorities'} correlation parameters (1,000 replications per market size).}}
	\label{fig:additive-allpriorities}
\end{figure}

\new{Since the effects of correlation in preferences and priorities differ, we conduct further simulations to analyze which effect dominates when both are present. The results appear in Figure \ref{fig:bothchanges}; $\alpha_i$ and $\alpha_s$ denote the correlation parameters for preferences and priorities, respectively. }

\begin{figure}[h!]
	\centering
	\begin{subfigure}{0.32\textwidth}
		\includegraphics[width=\textwidth]{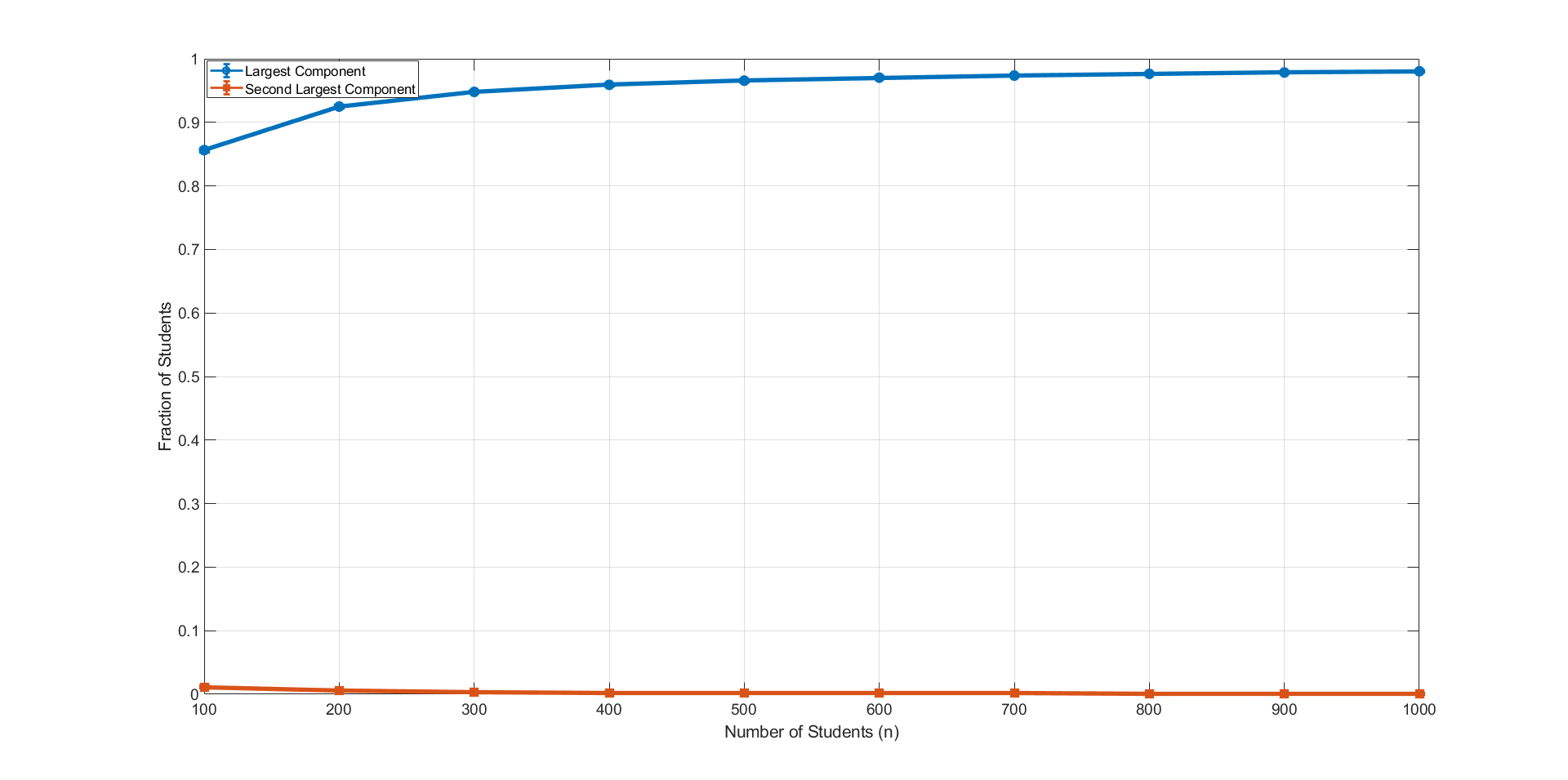}
		\caption{$\alpha_i=\alpha_s=0.25 $}
		\label{fig:alpha25priorities2}
	\end{subfigure}
	\hfill
	\begin{subfigure}{0.32\textwidth}
		\includegraphics[width=\textwidth]{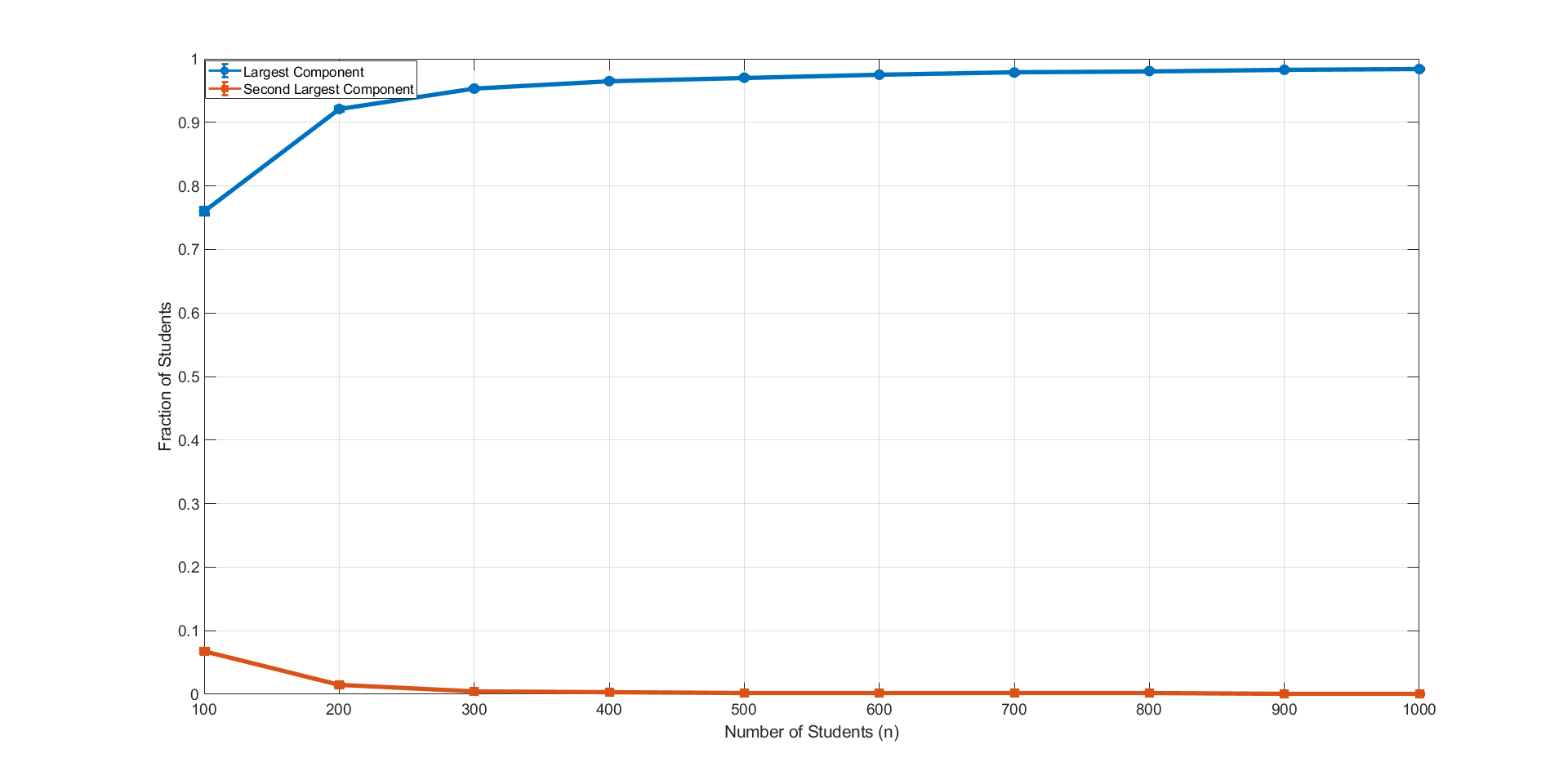}
		\caption{$\alpha_i=\alpha_s=0.5$}
		\label{fig:alpha5priorities2}
	\end{subfigure}
	\hfill
	\begin{subfigure}{0.32\textwidth}
		\includegraphics[width=\textwidth]{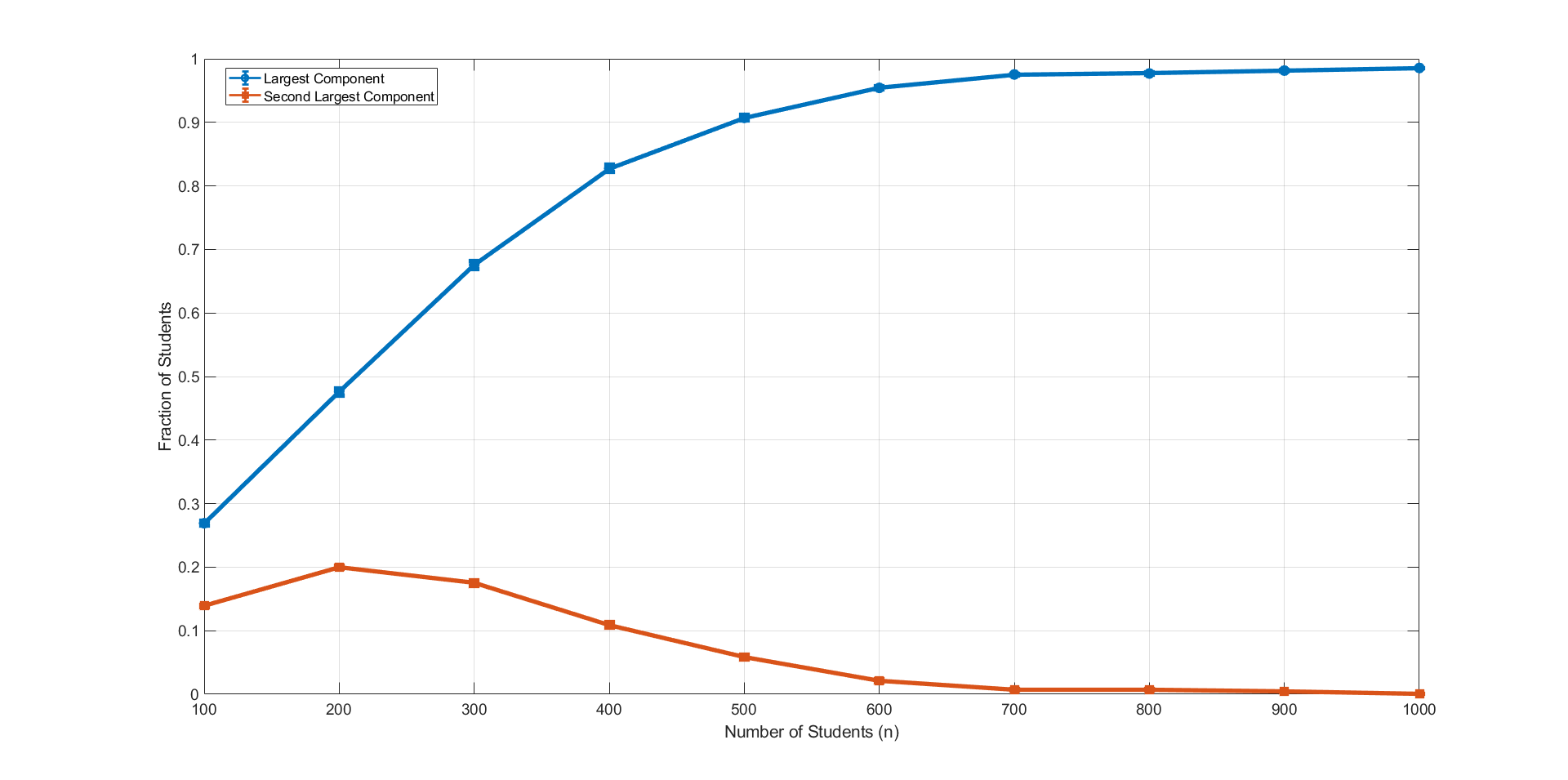}
		\caption{$\alpha_i=\alpha_s=0.75$}
		\label{fig:alpha75priorities2}
	\end{subfigure}
	\caption{\new{Fraction of nodes in largest (blue) and second-largest (red) SCCs under additive utility model for varying \emph{preferences' and priorities'} correlation parameters (1,000 replications per market size).}}
	\label{fig:bothchanges}
\end{figure}
\new{When introducing correlation on both sides, we find that the preference effect dominates decisively. Even at high correlation on both sides ($\alpha_i = \alpha_s = 0.75$), the giant SCC emerges and covers over 98\% of students by $n = 700$. For small markets under high bilateral correlation, two substantial components briefly coexist before merging into a single giant component (the same phase transition observed in the iid case, simply delayed to larger market sizes). When correlation is moderate on both sides ($\alpha_i = \alpha_s = 0.5$), the giant SCC reaches 99\% by $n = 1{,}000$, exceeding the iid baseline. These findings suggest that our main result is robust to realistic market structures  where either there is (not extreme) correlation in preferences, there is mild or moderate correlation in priorities, or there is (not extreme) correlation in both.}

\newpage\subsection{The Role of Priorities}
\label{app:role-priorities}

\new{In the main text, we assumed schools' priorities are independently and uniformly distributed, which forces a degree of heterogeneity in the way schools rank students. In this Appendix we expand on why the iid assumption implies that Ergin's acyclicity condition, which guarantees that DA is Pareto-efficient for every preference profile, is satisfied with super-exponentially small probability in our baseline model with $n$ students and schools with unit capacity.}

\new{An Ergin priority cycle is a tuple of two schools $a,b$ and three students $i,j,k$ such that
	\[
	i \triangleright_a j \triangleright_a k \triangleright_b i .
	\]
	By \citet[][Theorem 1]{ergin2002efficient}, DA is Pareto efficient for every preference profile if and only if the priority structure is acyclic. Alternatively, Ergin shows that the acyclicity condition holds if and only if, for every pair of schools $a,b$ and every student $i$, the rank of $i$ at $a$ differs from the rank of $i$ at $b$ by at most one.}

\new{In the iid benchmark, cyclic priorities are extremely likely.}

\begin{proposition}\label{prop:ergin-iid}
	\new{The probability that the priority structure is Ergin-acyclic converges to zero super-exponentially fast. In particular,
		\[
		\Pr(\text{priorities are Ergin-acyclic})
		=
		\exp\{-n^2\log n+O(n^2)\}.
		\]}
\end{proposition}

\begin{proof}
	Fix the priority order of one school and call it the reference order. We first note that pairwise acyclicity across all schools implies that the union of all adjacent pairs flipped by any school is itself a set of disjoint adjacent pairs in the reference order. Indeed, suppose two schools flip adjacent pairs that overlap in one student, say one flips $(x,y)$ and another flips $(y,z)$ in the reference order. Then the rank of $y$ differs by two across these two schools, violating Ergin's rank characterization. Hence overlapping flipped pairs cannot occur. Therefore there is a common set of disjoint adjacent pairs in the reference order such that each other school is obtained by flipping an arbitrary subset of these pairs.
	
	\new{This gives a unique representation of each acyclic profile: the common set is the union of all adjacent pairs flipped by at least one school, and for each pair we record the nonempty subset of schools that flip it. If this common set has size $k$, there are $\binom{n-k}{k}$ ways to choose it. For each chosen pair, there are $2^{n-1}-1$ nonempty subsets of the remaining $n-1$ schools that flip it. Hence, conditional on the first school's priority order,
		\[
		\Pr(\text{priorities are Ergin-acyclic})
		=
		\frac{1}{(n!)^{n-1}}
		\sum_{k=0}^{\lfloor n/2\rfloor}
		\binom{n-k}{k}
		\left(2^{n-1}-1\right)^k .
		\]}
	\new{The numerator has logarithm $O(n^2)$, whereas
		\[
		\log (n!)^{n-1}=n^2\log n+O(n^2)
		\]
		by Stirling's formula. Therefore
		\[
		\Pr(\text{priorities are Ergin-acyclic})
		=
		\exp\{-n^2\log n+O(n^2)\}.
		\]}
\end{proof}

\new{Proposition \ref{prop:ergin-iid} shows that the acyclic case is exceptional under iid priorities. We now strengthen this conclusion: the same is true throughout the additive correlation model from Appendix \ref{app:appendixthree}, in which each student $m$ has a common score $\theta_m \sim \mathrm{Uniform}[0,1]$ and each school $s$ ranks students by descending values of
	\[
	\pi_{s,m} \;=\; \alpha\,\theta_m \;+\; (1-\alpha)\,\epsilon_{s,m},
	\qquad \epsilon_{s,m} \stackrel{\mathrm{iid}}{\sim} \mathrm{Uniform}[0,1].
	\]
	The parameter $\alpha = 0$ recovers iid priorities; $\alpha = 1$ collapses to a master list, which is trivially Ergin-acyclic. The next result shows that every fixed correlation level strictly below this perfectly correlated boundary remains cyclic with high probability.}

\begin{proposition}\label{prop:ergin-correlated}
	\new{Under the additive priority model, for every fixed $\alpha \in [0,1)$,
		\[
		\Pr(\text{priorities are Ergin-acyclic}) \;\to\; 0
		\]
		exponentially fast as $n \to \infty$.}
\end{proposition}

\begin{proof}
	\new{The case $\alpha = 0$ is covered by Proposition \ref{prop:ergin-iid}. Fix $\alpha \in (0,1)$ and a tuple $(a,b;i,j,k)$ of two distinct schools and three distinct students. We show that the probability of an Ergin cycle on this tuple is bounded below by a positive constant $p(\alpha)$ independent of $n$.}
	
	\new{Set $\delta = \min\{1/4,\,(1-\alpha)/(4\alpha)\} > 0$ and define
		\[
		E_\delta \;=\; \big\{\, |\theta_x - \theta_y| \leq \delta \text{ for all } x,y \in \{i,j,k\}\,\big\}.
		\]
		Since $\theta_i,\theta_j,\theta_k$ are iid Uniform$[0,1]$, $\Pr(E_\delta) \geq c_1\,\delta^2$ for some absolute constant $c_1 > 0$.}
	
	\new{Conditional on $E_\delta$ and on the realised values of $(\theta_i,\theta_j,\theta_k)$, the Ergin cycle condition $i \triangleright_a j \triangleright_a k \triangleright_b i$ is equivalent to the system
		\[
		\begin{aligned}
			(1-\alpha)(\epsilon_{a,i} - \epsilon_{a,j}) &> \alpha(\theta_j - \theta_i),\\
			(1-\alpha)(\epsilon_{a,j} - \epsilon_{a,k}) &> \alpha(\theta_k - \theta_j),\\
			(1-\alpha)(\epsilon_{b,k} - \epsilon_{b,i}) &> \alpha(\theta_i - \theta_k),
		\end{aligned}
		\]
		in the independent shocks $\{\epsilon_{a,i},\epsilon_{a,j},\epsilon_{a,k},\epsilon_{b,i},\epsilon_{b,k}\}$. Dividing through by $(1-\alpha)$, each right-hand side is bounded in absolute value by $\alpha\delta/(1-\alpha) \leq 1/4$. The first two inequalities involve only the $\epsilon$ shocks at school $a$ and are independent of the third, which involves only the shocks at school $b$. Direct integration shows that, for any shifts of absolute value at most $1/4$, the probability of this system being satisfied is bounded below by a positive absolute constant $c_2 > 0$ (the worst case is shifts equal to $1/4$, which gives probability exactly $1/48 \cdot 9/32$).}
	
	\new{Therefore
		\[
		p(\alpha) \;:=\; \Pr\!\big(\,i \triangleright_a j \triangleright_a k \triangleright_b i\,\big) \;\geq\; c_1 c_2\,\delta(\alpha)^2 \;>\; 0.
		\]
		Take $\lfloor n/3 \rfloor$ disjoint blocks of two schools and three students each. The $\theta$ and $\epsilon$ variables governing different blocks are independent, so the cycle events are independent across blocks, and
		\[
		\Pr(\text{priorities are Ergin-acyclic}) \;\leq\; \big(1 - p(\alpha)\big)^{\lfloor n/3 \rfloor} \;\to\; 0
		\]
		exponentially fast as $n \to \infty$.}
\end{proof}

\new{The acyclic-priority case is therefore a singular boundary of the additive parameter space: every fixed $\alpha \in [0,1)$ produces Ergin cycles with high probability, and only the perfectly correlated case $\alpha = 1$ yields an acyclic priority structure. This complements the simulation evidence in Appendix \ref{app:appendixthree}: as the fixed value of $\alpha$ is chosen closer to 1, the exponential rate $p(\alpha)$ shrinks (since $\delta(\alpha) \to 0$), mirroring the gradual contraction of the giant SCC observed there, but the rate stays strictly positive for each fixed $\alpha<1$.}
	\end{document}